%% file: main.tex
\documentclass{article}
\usepackage{graphicx} 
\usepackage{import}
\usepackage{basic}
\usepackage{fullpage}

\title{Welfare Maximization in Bilateral Trade:\\ Improved Approximation Guarantees Beyond the Fixed Price Barrier \thanks{This was supported by Israel Science Foundation Grant 3129/25 and a BSF-NSF grant (BSF number: 2024785, NSF number: 2513504). The work of A. Shaulker was also supported by the Israel Science Foundation grant No. 301/24 and grant number 505/23.}}

\author{Shahar Dobzinski \thanks{Weizmann Institute of Science (shahar.dobzinski@weizmann.ac.il).} \and
 Ariel Shaulker \thanks{Hebrew University of Jerusalem (ariel.shaulker@huji.ac.il).}}

\begin{document}

\maketitle
\begin{abstract}
    \input{abstract}
\end{abstract}

\input{intro}

\input{preliminaries}

\input{BOreserve}

\input{seller-only}

\input{both-sides}

\bibliography{bilateral-trade}
\bibliographystyle{alpha}

\appendix
\input{lp-instance}

\end{document}

%% file: abstract.tex
We study the setting of welfare maximization in bilateral trade, where
the values of both the buyer and the seller are drawn from independent
distributions. Our goal is to maximize social welfare. In this
setting, fixed price mechanisms have been extensively studied. In a
fixed price mechanism, there is a price $p$ that depends only on the
distributions of the buyer and the seller. Trade occurs if and only if
the buyer’s value is at least $p$ and the seller’s value is at most
$p$. A long line of work has culminated in determining almost exactly
the approximation ratios achievable by fixed price mechanisms: there
exists a fixed price mechanism that obtains at least a $0.72$ fraction
of the social welfare, but no fixed price mechanism can guarantee more
than a $0.7381$ fraction of it \cite{CW-STOC, CSTOC}. No other incentive-compatible mechanism is known to beat the performance of fixed-price mechanisms in this setting.

This paper shows how to achieve a larger fraction of
the optimal welfare with other classes of mechanisms. Specifically, we study the buyer-offering
mechanism with a reserve price. In this mechanism, the buyer observes
its value and makes a take-it-or-leave-it offer to the seller, where
the offer is at least the reserve price. Beyond its simplicity, this
natural mechanism is attractive because the seller always has a
dominant strategy: accept the offer if its value is at most the offer,
and otherwise reject it. We show that there always exists a reserve price that guarantees a $\upperbound$ fraction of the social welfare. This not only improves upon the best previously known approximation guarantee for the problem, but also demonstrates that fixed-price mechanisms are not optimal in this setting.

%% file: intro.tex
\section{Introduction}

\subsubsection*{Background: Bilateral Trade and Fixed Price Mechanisms}

We study one of the simplest and most basic settings in mechanism design: bilateral trade, where the values of the buyer and the seller are drawn from independent distributions. The setting consists of a single item, a seller, and a buyer, where the seller initially holds the item. The value of the seller for the item, denoted $v_s$, is private and drawn from a publicly known distribution $\mathcal F_s$. Similarly, the private value of the buyer for the item is denoted $v_b$ and is drawn from a publicly known distribution $\mathcal F_b$.

Our goal in this paper is to facilitate trade between the buyer and the seller in order to maximize social welfare, which is simply $\max(v_s, v_b)$. If trade occurs, the seller receives a payment $p_s$ and the buyer pays $p_b$. Each participant seeks to maximize their profit: if trade occurs, the seller’s profit is $p_s$ and the buyer’s profit is $v_b - p_b$. If no trade occurs, then $p_b = p_s = 0$; in this case, the seller’s profit is $v_s$ and the buyer’s profit is $0$. We focus on incentive-compatible mechanisms that are ex post individually rational (if trade occurs, $p_s \ge v_s$ and $v_b \ge p_b$) and budget balanced ($p_b = p_s$).

The bilateral trade problem was studied in the seminal work of Myerson and Satterthwaite \cite{Myerson-Satterthwaite}. They showed that unless the distributions $\mathcal F_s$ and $\mathcal F_b$ are very degenerate, as characterized by a precise technical condition, the optimal ``first-best'' solution is not implementable, even by mechanisms that are incentive compatible in the Bayes--Nash sense.

Since the optimal solution is not implementable, work in algorithmic mechanism design on this problem has focused on approximation guarantees. In particular, fixed-price mechanisms have been extensively studied. In a fixed-price mechanism, there is a price $p$ that depends only on the distributions $\mathcal F_s$ and $\mathcal F_b$, but not on the realized values $v_s$ and $v_b$. In such a mechanism, each participant has a dominant strategy: the seller accepts if and only if $v_s \le p$, and the buyer accepts if and only if $v_b \ge p$.

In fact, a well-known observation is that every dominant-strategy mechanism for bilateral trade must be a fixed-price mechanism. To see this, fix a value $v_b$ of the buyer. Incentive compatibility implies that the payment $p_s$ received by the seller cannot depend on its reported value $v_s$; otherwise, the seller could profitably misreport its value to obtain a higher payment whenever trade occurs. A symmetric argument shows that the payment $p_b$ made by the buyer cannot depend on $v_b$. Since the mechanism is budget balanced and satisfies $p_s = p_b$, the trade price cannot depend on either agent’s realized value, and therefore must be fixed.

Fixed-price mechanisms have been shown to be quite powerful in this setting. Blumrosen and Dobzinski~\cite{BD14} showed that fixed-price mechanisms guarantee at least one half of the optimal social welfare for every choice of distributions. They also showed that a slightly larger fraction $\frac {28} {55}$ can be guaranteed by a more careful analysis. This was subsequently improved to $\frac {13} {25}$~\cite{italians}, then to $1-\frac{1}{e}$~\cite{BD21}. Even this bound was further refined to $1-\frac 1 e+0.0001$~\cite{KPV}, and finally tightened to an almost optimal factor of approximately $0.72$~\cite{CW-STOC,CSTOC}. These two last papers also show that no fixed-price mechanism can obtain a fraction larger than $0.7381$. 

A substantial body of work has also focused on approximating the gains from trade, which measures the improvement by trade: $\max (v_b-v_s,0)$. McAfee~\cite{mcaffee} showed that for certain distributions of buyer and seller values, a fixed-price mechanism can achieve at least one half of the optimal gains from trade. Blumrosen and Dobzinski~\cite{BD14} proved that for any constant fraction, there exist distributions for which no fixed-price mechanism can guarantee that fraction of the optimal gains from trade.

To bypass this limitation, Blumrosen and Mizrahi~\cite{BM} introduced the seller-offering mechanism, in which the seller makes a profit-maximizing take-it-or-leave-it offer to the buyer based on its value~$v_s$. In this mechanism, only the buyer has a dominant strategy. Nevertheless, they showed that when the buyer's value distribution satisfies the monotone hazard rate condition, the seller-offering mechanism is Bayesian incentive compatible and guarantees a fraction of $\frac{1}{e}$ of the optimal gains from trade. Subsequently, Brustle et al.~\cite{Bru} showed that selecting the better outcome between the seller-offering mechanism and the buyer-offering mechanism, where the buyer makes a profit-maximizing take-it-or-leave-it offer to the seller, recovers at least half of the gains from trade achievable by any incentive-compatible mechanism. In a major advance, Deng et al.~\cite{GDTCA} proved that the better of the two offering mechanisms guarantees about $0.122$ of the optimal gains from trade. This constant was later improved to $0.317$ by Fei~\cite{IGFT}.

Welfare guarantees in bilateral trade were also studied in the context of correlated values \cite{DS24} and interdependent values \cite{DEGST25}. 


\subsubsection*{Our Contribution: Buyer-Offering Mechanisms and One-Sided Dominant Strategy Mechanisms}

Our goal in this paper is to design incentive-compatible mechanisms that improve upon the welfare guarantees achievable by fixed-price mechanisms in the basic setting where valuations are drawn from independent distributions. Moreover, prior to this
work, fixed-price mechanisms provided the best known welfare guarantees for this
setting, and therefore our improvement over fixed-price mechanisms also improves the
best known welfare guarantees for bilateral trade with independent values.

Recall that fixed-price mechanisms coincide exactly with the class of dominant-strategy mechanisms; therefore, obtaining stronger guarantees requires considering weaker solution concepts. A natural candidate is the class of Bayes–Nash incentive-compatible mechanisms. However, mechanisms in this class are often unnatural, and their equilibria tend to depend critically on both participants having precise knowledge of the underlying distributions. If even one participant has slightly incorrect information about one of the distributions, no meaningful guarantees may remain, and it is possible for the performance of the mechanism to deteriorate significantly.

Instead, we restrict our attention to Bayes-Nash incentive-compatible mechanisms in which one participant has a dominant strategy: one-sided dominant-strategy mechanisms. Our first observation is a characterization:


\headline{Observation: }{\begin{itemize}
\item Let $\mathcal M$ be a one-sided mechanism for bilateral trade where the seller has a dominant strategy. Then, there is a set of offers $\buyeroffers$ such that $\mathcal M$ can be implemented as a buyer-offering mechanism where for every $v_b$ the buyer selects its profit-maximizing offer from the set $\buyeroffers$.
\item Similarly, let $\mathcal M$ be a one-sided mechanism for bilateral trade where the buyer has a dominant strategy. Then, there is a set of offers $\selleroffers$ such that $\mathcal M$ can be implemented as a seller-offering mechanism where for every $v_s$ the seller selects its profit-maximizing offer from the set $\selleroffers$.
\end{itemize}
}

Note the strategic attractiveness of these mechanisms to the players: one side simply makes the best possible offer according to the information that it possesses, and the other side just accepts or rejects the offer. For example, in buyer-offering mechanisms this implies, in particular, that the players do not have to agree on the details of the distributions of the seller or the buyer, and neither of the players has to even hold any knowledge of the distribution of the buyer. It is only used to determine the set $\buyeroffers$ of possible offers. 

The buyer offering mechanism -- with no restrictions on the set of possible offers --  was also shown to be useful in the context of welfare maximization: its expected social welfare is at least $1 - \frac 1 e$ of the optimal social welfare, even if the values $v_s,v_b$ are drawn from a joint distribution rather than independently\footnote{Interestingly, the seller offering does not guarantee in general any constant fraction of the optimal social welfare: consider a seller with value $0$ and a buyer whose value is distributed according to an equal revenue distribution.} \cite{DS24}. Unfortunately, it can be shown that the buyer offering mechanism cannot guarantee a fraction larger than $1-\frac 1 e$ of the optimal welfare even if the values are independently drawn.

We observe that the buyer offering mechanism fails to provide a good fraction of the optimal welfare when the offers the buyer makes are too low. That is, in order to maximize its profit, the buyer prefers to decrease the probability of trade in exchange for a higher profit when a trade is made. A natural solution to this would be, of course, to restrict the set of possible offers by setting a reserve price so that the buyer cannot make low offers. 

Buyer offering mechanisms with reserves can also be observed in practice. In many markets, buyers are allowed to make take-it-or-leave-it
offers, but these offers are subject to a minimum acceptable price. For example,
Priceline's ``Name Your Own Price'' model allows buyers to submit bids for
services such as hotels, flights, and car rentals, where bids below an internal
reserve are rejected. More generally, reserve prices are a standard way to
prevent inefficiently low offers while preserving the simple interaction pattern
in which one side proposes a price and the other side only accepts or rejects.

\subsubsection*{Our Contribution: Finding the Right Reserve(s)}

Setting an appropriate reserve turns out to be a challenging task. Since the reserve cannot depend on the seller's value, if we set it too low, it becomes ineffective when the buyer's value is high. In contrast, if the reserve is high, a buyer with a value below it will not be able to trade at all. The key obstacle is therefore to understand whether there is a reserve price that balances the two competing effects.

The main technical tool that will aid us in proving the existence of a good reserve and then finding it is the following lemma:

\headline{The Tradeoff Lemma: } {Consider a buyer offering mechanism with some reserve price $r$. Let $\alpha_L$ denote the limit of the optimal welfare extracted (conditioned on the value of the buyer being $v_b$) when the value of the buyer $v_b$ approaches $r$ from below. Then, when $v_b>r$, the fraction of the optimal welfare extracted (conditioned on the value of the buyer being $v_b$) is $f(\alpha_L)$.
}

We defer the specification of the function $f(\cdot)$ to the technical parts of the paper. Before analyzing the implications of the tradeoff lemma, let us explain its statement. Fix a reserve price $r$. Conditioned on fixing the value of the buyer $v_b<r$, the value of the optimal solution is $ v_b\cdot \Pr[v_s\leq v_b]+ E[v_s|v_s>v_b]\cdot \Pr[v_s>v_b]$. Since $v_b<r$ the mechanism never trades. The expected welfare of the mechanism is simply $E[v_s]$. The ratio $\frac {E[v_s]} { v_b\cdot \Pr[v_s\leq v_b]+ E[v_s|v_s>v_b]\cdot \Pr[v_s>v_b]}$ is minimized when $v_b$ approaches $r$ from below, so $\alpha_L=\frac {E[v_s]} { r\cdot \Pr[v_s\leq r]+ E[v_s|v_s>r]\cdot \Pr[v_s>r]}$. The tradeoff lemma now tells us how low the fraction of the welfare guaranteed can be for each $v_b>r$, i.e., when the value of the buyer is above the reserve. This guarantee holds for every $v_b$, so, importantly, it is independent of the distribution of the values of the buyer.

A first application of the lemma is the following:

\headline{Theorem: } {For every distribution $\mathcal F_s$ of the seller, there is a reserve $r$, such that for every $v_b$ the fraction of the welfare that the buyer-offering mechanism with reserve $r$ extracts (conditioned on the value of the buyer being $v_b$) is at least $0.71$, regardless of the distribution $\mathcal F_b$.}

Note the appeal of this theorem: even if the buyer's distribution is not known at all, the theorem guarantees that we can use only the distribution of the seller to set a reserve price. Moreover, the approximation guarantee is stronger than usual: it holds for every $v_b$ (and not in expectation over the $\mathcal F_b$). We also prove that this ratio is very close to what is possible to achieve with this class of mechanisms:

\headline{Theorem: } {There is a distribution of the seller $\mathcal F_s$, such that for every reserve price $r$, there is a value $v_b$ of the buyer such that the fraction of the optimal welfare that the buyer offering mechanism with reserve $r$ guarantees (conditioned on the value of the buyer being $v_b$) is at most $0.7159$.}

We also point out that deterministic fixed-price mechanisms that are not aware of the distribution of the buyer cannot recover more than $\frac 1 2$  of the optimal social welfare \cite{BD14}, so our (deterministic) mechanism already demonstrates how to beat fixed-price mechanisms. Yet, the performance of this mechanism is still slightly below $0.72$, which is the fraction of the optimal welfare that the best currently fixed-price mechanism guarantees. We thus also use the distribution of the buyer in order to beat this ratio:

\headline{The Main Result: } {For every two distributions $\mathcal F_s, \mathcal F_b$ of the seller and the buyer, there is a reserve $r$, such that the expected welfare of the buyer-offering mechanism with reserve $r$ is at least a fraction of $\upperbound$ of the welfare of the optimal social welfare.}

This ratio not only beats the best currently known fixed-price mechanism, but it is also better than the $0.7381$ bound that is the provable limit of what can be achieved by any fixed-price mechanism \cite{CW-STOC, CSTOC}. Hence, this mechanism is the first to beat fixed-price mechanisms in bilateral trade. Since the mechanisms of \cite{CW-STOC,CSTOC} are not only the best-known fixed-price mechanisms but also the best-known incentive-compatible mechanisms in general, the mechanism we present achieves the best guarantee currently known among all incentive-compatible mechanisms.

The proof of this last result is quite delicate. We use the tradeoff lemma to find many points $(\alpha_L,f(\alpha_L))$. Then, we further observe that for $v_b>r$ the guarantee can get better than what can be guaranteed by the tradeoff lemma as a function of the distance of $v_b$ from $r$: when the distance $|v_b-r|$ is small, the mechanism either trades the item or the value of the seller is close to or bigger than the value of the buyer, so trade is not necessary for good approximation. We use a variant of the tradeoff lemma to quantify this intuition. Together, this gives us, for every reserve $r$, a lower bound on the approximation ratio for every value $v_b$. The final step is to write a factor-revealing linear program that takes into account the distribution of the buyer $\mathcal F_b$. By considering specific numerical values, the LP gives us that there is always a reserve price that extracts at least a fraction of $\upperbound$ of the optimal social welfare, as needed. We note that quadratic programs were already used by~\cite{CW-STOC, CSTOC} to determine the approximation ratio of fixed-price mechanisms. However, despite considering a more involved class of mechanisms, we are able to formulate our problem as a linear program, which allows us to guarantee improved performance by increasing the size of the program while remaining within the limits of tractability.

\subsubsection*{Conclusion and Open Questions}

In this paper, we imporve over the best known approximation guarantee known for bilateral trade when the values of the participants are drawn from independent distributions. We achieve this by overcoming the limitations of dominant-strategy mechanisms by considering one-sided dominant-strategy mechanisms, thereby sacrificing as little as possible from a strategic standpoint. Several open questions remain. The most immediate one is to perform the analysis of the factor-revealing linear program in a more analytic way that will improve the approximation ratio.

It is also natural to ask for the exact performance guarantees achievable by one-sided dominant-strategy mechanisms and by general Bayes–Nash incentive-compatible mechanisms. We note that it is known that no deterministic Bayes–Nash mechanism can achieve an approximation ratio better than $0.89$~\cite{DS24}. 

More broadly, our results highlight the importance of identifying natural subclasses of Bayes–Nash incentive-compatible mechanisms that admit strong performance guarantees. Finding additional such classes, both within and beyond the realm of bilateral trade, is an important direction for future work.

%% file: preliminaries.tex
\section{Preliminaries}

In the bilateral trade problem, there are two players: a seller and a buyer. The seller owns an indivisible item and has value $v_s$ for it, while the buyer’s value for the item is $v_b$. The values $v_b$ and $v_s$ are drawn independently from distributions $\distbuyer$ and $\distribution$, respectively. That is, $v_b \sim \distbuyer$ and $v_s \sim \distribution$.  
We denote by $\suppbuyer$ and $\suppseller$ the supports of the buyer’s and seller’s distributions, respectively. As is common in the literature, we assume that $\suppbuyer , \suppseller \subseteq [0,1]$.

Let $\mechanism = (x,p)$ be a (deterministic) direct mechanism for the bilateral trade problem. The function  
$x : \suppseller \times \suppbuyer \to \{0,1\}$ is the allocation rule, which decides for each pair of values whether trade occurs.
The function $p : \suppseller \times \suppbuyer \to \mathbb{R}$ is the payment rule, which assigns to each pair of values the monetary transfer from the buyer to the seller.
We consider incentive-compatible and individually rational mechanisms, as defined next.

\begin{definition}
A mechanism $\mechanism= (\allocation, \payment)$ is dominant-strategy incentive-compatible for the seller if for every $v_s, v_s' \in \suppseller$ and every $v_b \in \suppbuyer$, it holds that
\[
 p(v_s, v_b)+ v_s \cdot (1-x(v_s, v_b))
\;\ge\;
 p(v_s', v_b)+ v_s \cdot (1-x(v_s', v_b) )
\]
\end{definition}

\begin{definition}
A mechanism $\mechanism= (\allocation, \payment)$ is \emph{dominant-strategy incentive-compatible} for the buyer if for every $v_b, v_b' \in \suppbuyer$ and every $v_s \in \suppseller$, it holds that
\[
 v_b \cdot x(v_s, v_b) - p(v_s, v_b)
\;\ge\;
v_b \cdot x(v_s, v_b') - p(v_s, v_b').
\]
\end{definition}

\begin{definition}
A mechanism $\mechanism= (\allocation, \payment)$ is \emph{Bayesian incentive-compatible} for the seller if for every $v_s, v_s' \in \suppseller$, it holds that
\[
\mathbb{E}_{v_b \sim \distbuyer} [  p(v_s, v_b)+(1-x(v_s, v_b))\cdot v_s  ]
\;\ge\;
\mathbb{E}_{v_b \sim \distbuyer} [  p(v_s', v_b)+(1-  x(v_s', v_b))\cdot v_s ]
\]
\end{definition}

\begin{definition}
A mechanism $\mechanism= (\allocation, \payment)$ is \emph{Bayesian incentive-compatible} for the buyer if for every $v_b, v_b' \in \suppbuyer$, it holds that
\[
\mathbb{E}_{v_s \sim \distribution} [v_b \cdot x(v_s, v_b) - p(v_s, v_b)]
\;\ge\;
\mathbb{E}_{v_s \sim \distribution} [v_b \cdot x(v_s, v_b') - p(v_s, v_b')]
\]
\end{definition}

\begin{definition}
A mechanism $\mechanism= (\allocation, \payment)$ is ex-post individually rational if for every $v_b \in \suppbuyer$ and every $v_s \in \suppseller$, it holds that
\[
v_b \cdot x(v_s, v_b)  \geq p(v_s, v_b) \quad \text{ and } \quad
  p(v_s, v_b) \geq v_s \cdot (1- x(v_s, v_b))
\]
\end{definition}

If there is no trade we assume that $p(v_s, v_b)=0$.

We say that a mechanism is \emph{dominant-strategy incentive-compatible} if it is dominant-strategy incentive-compatible for both players, and \emph{Bayesian incentive-compatible (BIC)} if it is Bayesian incentive-compatible for both players. 
We say that a mechanism is \emph{one-sided DSIC for the seller} if it is dominant-strategy incentive-compatible for the seller and Bayesian incentive-compatible for the buyer, and \emph{one-sided DSIC for the buyer} if it is dominant-strategy incentive-compatible for the buyer and Bayesian incentive-compatible for the seller.
More generally, we say that a mechanism is \emph{one-sided} if it is dominant-strategy incentive-compatible for one player and Bayesian incentive-compatible for the other.


The optimal welfare of an instance of the bilateral trade problem is
\[
\opt{\distribution, \distbuyer}
\;=\;
\mathbb{E}_{v_s \sim \distribution,\; v_b \sim \distbuyer}
\big[ \max \{ v_s, v_b \} \big]
\]
The welfare of a mechanism $\mechanism = (\allocation, \payment)$ for an instance of the bilateral trade problem is denoted
\[
\mechanism_{\distribution, \distbuyer}
\;=\;
\mathbb{E}_{v_s \sim \distribution,\; v_b \sim \distbuyer}
\big[ \allocation(v_s, v_b)\cdot v_b
+ \big(1-\allocation(v_s, v_b)\big)\cdot v_s \big]
\]

Similarly, for a fixed buyer value $v_b$, we denote by $\opt{\distribution}(v_b) = \mathbb{E}_{v_s \sim \distribution}[\max\{v_s, v_b\}]$ and $\mechanism_{\distribution}(v_b) = \mathbb{E}_{v_s \sim \distribution}[x(v_s, v_b)\cdot v_b + (1-x(v_s, v_b))\cdot v_s]$ the expected welfare of the optimal allocation and of the mechanism $\mechanism$, respectively, when the buyer's value is $v_b$ and the seller's value is drawn from $\distribution$.

In this paper, we consider the buyer-offering mechanism with a reserve $\reserve$, denoted by $\bomech{\reserve}$. In the buyer-offering mechanism with reserve $\reserve$, the buyer with value $v_b$ makes a take-it-or-leave-it-offer of price $\price \geq \reserve$ to the seller, where the price $\price$ maximizes the buyer's profit among all prices of at least $\reserve$.


\subsection{Characterizing One-sided Dominant Strategy Mechanisms}

The next proposition shows that every mechanism in which one side has a dominant strategy can be implemented as a buyer-offering mechanism.

\begin{proposition}\label{proposition:one-sided-char}
Let $M$ be a deterministic Bayesian incentive-compatible bilateral-trade mechanism in which the seller has a dominant strategy. Then there exists a set of take-it-or-leave-it offers $\buyeroffers$ (each offer is a price in $[0,1]$) such that $M$ can be implemented as a buyer-offering mechanism. Symmetrically, any mechanism in which the buyer has a dominant strategy can be implemented as a seller-offering mechanism.
\end{proposition}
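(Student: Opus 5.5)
The plan is to fix the buyer's report and use the seller's dominant-strategy constraint to show that, for that report, the mechanism is a posted price to the seller. Fix $v_b \in \suppbuyer$ and write $x(\cdot)=x(\cdot,v_b)$ and $p(\cdot)=p(\cdot,v_b)$.

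First, the price is constant on the trade region. If $x(v_s)=x(v_s')=1$, DSIC for the seller in both directions gives $p(v_s)\ge p(v_s')$ and $p(v_s')\ge p(v_s)$. So there is a single price $q(v_b)$ paid whenever trade occurs. When there is no trade, the convention $p=0$ holds.

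Second, the trade region is a threshold set with threshold $q(v_b)$. Take any $v_s$ with $x(v_s)=0$, so the seller's utility is $v_s$. DSIC against any trading report $v_s'$ gives $v_s\ge q(v_b)$. Conversely, take $v_s$ with $x(v_s)=1$. Individual rationality gives $q(v_b)=p(v_s)\ge v_s$, and this is the only place ex-post IR is used. Hence trade occurs only if $v_s\le q(v_b)$, and no trade occurs only if $v_s\ge q(v_b)$.

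The single point of slack is $v_s=q(v_b)$, where the seller is indifferent. There we break ties in favour of acceptance, or note that either choice gives the seller, and hence the welfare, the same value. Therefore, for each report $v_b$, the mechanism behaves exactly like the offer ``sell at price $q(v_b)$'', which the seller accepts iff $v_s\le q(v_b)$. If no seller value trades at $v_b$, the mechanism is equivalent to an offer below all of $\suppseller$, or to a null offer, which we add to the menu.

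Next, define $\buyeroffers=\{q(v_b): v_b\in\suppbuyer\}$. Reporting $v_b'$ in $M$ is then the same as offering $q(v_b')$, so the buyer's interim utility from report $v_b'$ is $\Pr[v_s\le q(v_b')](v_b-q(v_b'))$. Bayesian IC for the buyer says the truthful report maximizes this over all reports. Equivalently, $q(v_b)$ is a profit-maximizing offer within $\buyeroffers$. So $M$ is implemented by the buyer-offering mechanism with menu $\buyeroffers$, in which each buyer chooses its best offer, with ties broken as in $M$. Allocations and payments coincide pointwise.

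The symmetric statement follows by exchanging roles. With $v_s$ fixed, buyer DSIC plus IR make the buyer's side a posted price $q(v_s)$ that the buyer accepts iff $v_b\ge q(v_s)$. Seller BIC then makes $q(v_s)$ optimal in $\selleroffers=\{q(v_s)\}$.

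The main obstacle is the boundary and tie-breaking bookkeeping rather than the core argument. This includes indifference at $v_s=q$, reports with no trade, and ties among the buyer's optimal offers. I would handle these by letting the implementation use $M$'s own tie-breaking, so that ``implemented'' means matching outcomes under that selection.
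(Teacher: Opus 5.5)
Your proposal is correct and follows essentially the paper's argument: seller DSIC forces a single trade price $q(v_b)$ for each buyer report, the menu is the set of these prices, and buyer BIC makes the price of the truthful report profit-maximizing. You also justify the threshold form of the trade region, via DSIC against a trading report plus ex-post IR, which the paper only asserts. One small slip: at $v_s=q(v_b)$ the tie-breaking leaves only the seller's utility unchanged, not the welfare, since trading yields welfare $v_b$ and not trading yields $q(v_b)$. So you need your fallback of adopting $M$'s own tie-breaking, not the claim that the choice does not affect welfare.
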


\begin{proof}
Fix a mechanism $M$ in which the seller has a dominant strategy.

Fix a buyer with value $v_b$. Because the seller has a dominant strategy, the payment received upon trade cannot depend on the seller’s report. Otherwise, the seller could profit from adjusting the payment while still trading the item. Thus, when the value of the buyer is $v_b$, whenever trade occurs, the seller receives a fixed payment $p(v_b)$. I.e., fixing $v_b$, $M$ is equivalent to posting a take-it-or-leave-it offer at price $p(v_b)$ to the seller.

Let $\buyeroffers= \{\, p( v_b) :  v_b\in[0,1] \,\}$. The original mechanism $M$ is implemented by allowing the buyer, upon observing its value $v_b$, to choose the report $v_b$ that maximizes its profit. Equivalently, the buyer selects the offer $p\in \buyeroffers$ that maximizes its profit. Given any such offer, the seller’s dominant strategy is to accept if and only if $v_s\le p$. Hence $M$ is implementable as a buyer-offering mechanism.

The argument with roles reversed yields the symmetric statement for mechanisms in which the buyer has a dominant strategy.
\end{proof}

%% file: BOreserve.tex
    





\section{Tools for Analyzing Buyer-Offering Mechanisms}

This section introduces two technical tools that will be applied in Sections \ref{sec-seller-only} and \ref{sec-main-result} to prove the existence of good buyer-offering mechanisms. The first and main tool is the tradeoff lemma (Lemma \ref{lemma:assymetric}), which allows us to bound the approximation ratio of values above the reserve when given a bound on the approximation ratio of values below the reserve. The second lemma is the Neighborhood Lemma (Lemma \ref{lemma:distance}), which bounds the distance between two reserves, if we know the approximation ratio when the value of the buyer is below these reserves.

\subsection{The Tradeoff Lemma}

\begin{lemma}[The Tradeoff Lemma]\label{lemma:assymetric}
Let $\distribution$ be a distribution over the value of the seller and let $\appbelow \in (0,1)$.
Denote by $\reserve$ a value of the buyer such that the approximation ratio when never trading the item conditioned on the buyer's value being $r$ is at most $\appbelow$. Fix $v_b>\reserve$. Let $\lambda_{v_b}$ be such that
$\frac{\reserve}{v_b} \geq  \lambda_{v_b}$. Then, the approximation ratio of the buyer-offering with reserve $r$ (conditioned on the buyer's value being $v_b$) is at least as large as the solution to the optimization problem:
\begin{equation}\nonumber
 \min_{1-\appbelow \leq x \leq 1, \lambda_{v_b} \leq \ell \leq 1}   1+x(1-\ell)\ln{x} \text{ s.t. } \ell \geq \frac{1-x+x\ln{x}}{\frac{x\cdot\appbelow}{1-\appbelow}+x\ln{x}}\\
\end{equation}
Moreover, the solution to this optimization problem is obtained at the boundaries of the $\ell$ constraints, i.e., it is obtained when $\ell =\lambda_{v_b}$ or when $\ell = \frac{1-x+x\ln{x}}{\frac{x\cdot\appbelow}{1-\appbelow}+x\ln{x}}$.
\end{lemma}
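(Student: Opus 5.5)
The plan is to normalize $v_b=1$, so that $\ell$ plays the role of $r/v_b$. I then express both the mechanism's loss and the optimal welfare in terms of integrals of the seller's CDF $F$. The buyer's incentive constraint gives a revenue-curve bound on $F$ above the reserve, and the hypothesis on $r$ gives a lower bound on $\int_0^r F$; these two facts should yield the constraint on $\ell$ and the objective $1+x(1-\ell)\ln x$.

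\textbf{Step 1: welfare identities.} For a buyer value $v_b=1$ and an accepted price $p\ge r$,
\[
\mathrm{ALG}(1)=\mathbb E[v_s]+\int_0^p F(t)\,dt,\qquad \mathrm{OPT}(1)=\mathbb E[v_s]+\int_0^1 F(t)\,dt .
\]
Hence the loss is $\int_p^1 F(t)\,dt$, up to the usual accounting $\int_p^1 (F(t)-F(p))\,dt + (1-p)F(p)$. Rescaling the standard identity for the seller's value gives, for the buyer at $r$,
\[
\mathrm{OPT}(r)=\mathbb E[v_s]+\int_0^r F(t)\,dt .
\]
The hypothesis $\mathbb E[v_s]\le \appbelow\,\mathrm{OPT}(r)$ is therefore equivalent to
\[
\int_0^r F(t)\,dt\;\ge\;\frac{1-\appbelow}{\appbelow}\,\mathbb E[v_s].
\]
This is the only information about the seller below the reserve. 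Note that it is where the term $\frac{x\appbelow}{1-\appbelow}$ comes from after inversion.

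\textbf{Step 2: revenue-curve bound.} Let $U=(1-p)F(p)$ be the buyer's optimal profit among offers $q\ge r$. Optimality of $p$ gives $F(q)\le U/(1-q)$ for every $q\in[r,1]$. I would parametrize the worst case by a single quantity $x$ (naturally $x=1-p$, or equivalently the normalized profit level) and take the extremal seller distribution. This distribution equals the equal-revenue curve $U/(1-t)$ between $p$ and the point where it hits $1$, and it pushes the mass below $r$ so as to make $\int_0^r F$ as large as the constraint forces while keeping $\mathbb E[v_s]$ small.

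Integrating $U/(1-t)$ produces the logarithmic terms. This should give:
\begin{itemize}
\item the loss as a fraction of OPT equal to $-x(1-\ell)\ln x$, where the factor $(1-\ell)$ comes from the interval $[r,1]$ on which the revenue bound is available;
\item feasibility of the below-reserve constraint exactly when $\ell\ge \frac{1-x+x\ln x}{\frac{x\appbelow}{1-\appbelow}+x\ln x}$.
\end{itemize}
The range $x\ge 1-\appbelow$ should follow from the same constraint, since trade must be at least as attractive as the no-trade ratio.

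\textbf{Step 3: the extremal reduction (main obstacle).} The hardest step is justifying that this extremal distribution is worst case, i.e.\ that the ratio, viewed as a functional of $F$, is minimized by the two-piece form. I would first try a pointwise argument. Decreasing $F$ on $[r,p]$ and increasing it up to the revenue bound on $(p,1]$ only lowers the ratio. Replacing the part below $r$ by a point mass (at $0$ together with mass near $r$) keeps $\int_0^r F$ fixed while minimizing $\mathbb E[v_s]$, which appears in both numerator and denominator with coefficient pushing the ratio toward $1$. Hence it should be minimized subject to the constraint. If the pointwise argument fails, I would fall back on an LP over step CDFs and check that its optimum is at this extreme.

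\textbf{Step 4: the ``moreover'' part.} Since $\ln x\le 0$, the objective $1+x(1-\ell)\ln x$ is nondecreasing in $\ell$ for fixed $x$. So for each $x$ the minimizing $\ell$ is the smallest feasible one, namely
\[
\ell=\max\Bigl\{\lambda_{v_b},\ \tfrac{1-x+x\ln x}{\frac{x\appbelow}{1-\appbelow}+x\ln x}\Bigr\}.
\]
Thus the minimum is attained at one of the two boundaries of the $\ell$-constraints.
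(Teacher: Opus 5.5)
Your proposal has a genuine gap. The two bullet points of Step~2 are the entire content of the lemma, and you assert them (``this should give'') rather than derive them. The route you sketch toward them also does not produce them.

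First, your identity for the mechanism drops a term. With $v_b=1$ and accepted offer $p$, $\mathrm{ALG}(1)=\mathbb{E}[v_s]+\int_0^pF(t)\,dt+(1-p)F(p)$, so the loss is $\int_p^1(F(t)-F(p))\,dt$, not $\int_p^1F(t)\,dt$. This matters: once $p=r$ one has $\int_r^1F\le U(1-\ln x)$, and it is precisely the subtracted $U$ that leaves $-U\ln x$. With your version the objective would come out as $1-(1-\ell)x+(1-\ell)x\ln x$.

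Second, you never reconcile the offer $p$ with the reserve $r$. The lemma's formulas live in the single variable $x=F(r)$, with $\ell=r/v_b$ and profit $U=(v_b-r)x$. Your choice ``$x=1-p$, or equivalently the profit level'' yields neither bullet; these two are not equivalent, and neither equals $F(r)$. The missing idea is the paper's reduction: move all seller mass in $(r,p]$ down to $r$, and all mass above $v_b$ down to $r$.
\begin{itemize}
\item The first move leaves the mechanism's welfare and $\mathrm{OPT}$ at $v_b$ unchanged; the second lowers both by the same amount.
\item Together they make $r$ itself the buyer's optimal offer and give $F(v_b)=1$.
\item Both moves only lower the no-trade ratio at $r$, so the hypothesis survives.
\end{itemize}
Your Step~3 proposes the opposite move, decreasing $F$ on $[r,p]$. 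That does not lower the ratio at $v_b$, which does not depend on $F$ on $[r,p)$ once $p$ is fixed. It does raise $\mathbb{E}[v_s]$, so it can violate $\int_0^rF\ge\frac{1-\alpha_L}{\alpha_L}\mathbb{E}[v_s]$ and is not a valid worst-case step.

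Third, the ``extremal distribution is worst case'' step that you flag as the main obstacle and leave open is not needed; the paper argues by relaxation. After the reduction, the single pointwise bound $F(t)\le\min\{1,U/(v_b-t)\}$ on $[r,v_b]$ gives $B:=\mathbb{E}[v_s\mathbf{1}[r<v_s\le v_b]]\ge x(v_b-r)\ln x+v_b(1-x)$. Two consequences follow:
\begin{itemize}
\item The ratio at $v_b$ equals $x+B/v_b$, so the bound on $B$ gives ratio $\ge 1+x(1-\ell)\ln x$.
\item The hypothesis at $r$ reads $(1-\alpha_L)B+\mathbb{E}[v_s\mathbf{1}[v_s\le r]]\le\alpha_L rx$. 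Dropping the nonnegative second term and substituting the bound on $B$ gives $\frac{\alpha_L}{1-\alpha_L}\ell x\ge 1-x+x(1-\ell)\ln x$ as a necessary condition.
\end{itemize}
So every instance maps to a feasible $(x,\ell)$ whose objective value is at most its true ratio, and no worst-case distribution has to be identified.

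Your Step~4 is correct and simpler than the paper's boundary argument. Your claim that $x\ge1-\alpha_L$ follows from the constraint is also true, but the reason you give is not an argument. A valid one: for $x<1$ the constraint is $\ell D\ge N$ with $N=1-x+x\ln x>0$ and $D=\frac{\alpha_L x}{1-\alpha_L}+x\ln x$. This forces $D>0$, and then $D\ge\ell D\ge N$, i.e.\ $x\ge 1-\alpha_L$.
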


The optimization problem in the tradeoff lemma might be hard to work with. The two corollaries below provide simpler optimization problems, at the cost of a looser bound. The two corollaries differ in which constraint they relax. The first (Corollary~\ref{cor:threshold-lemma-no-bound-on-v-b}) is useful when we have a poor bound, or no bound at all, on the value 
$v_b$ relative to the reserve $\reserve$. The second (Corollary~\ref{cor:threshold-lemma-bound-on-v-b}) is useful when we have a good bound on the value $v_b$ relative to the reserve $\reserve$. After introducing the corollaries, we prove the tradeoff lemma.

\begin{corollary}\label{cor:threshold-lemma-no-bound-on-v-b}
    Let $\distribution, \appbelow, \reserve, v_b$ be as in the tradeoff lemma. The approximation ratio of the buyer-offering with reserve $r$ (conditioned on the buyer's value being $v_b$) is at least as large as the solution to the optimization problem:
    $$
    \min_{1-\appbelow \leq x\leq 1} 1+x\ln x- \frac{\ln x \left(1-x+x\ln x\right)}{\ln x +\frac{\appbelow}{1-\appbelow}}
    $$
\end{corollary}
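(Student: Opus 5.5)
We derive the corollary directly from the Tradeoff Lemma (Lemma~\ref{lemma:assymetric}) by discarding the information on $v_b$. Since $v_b>\reserve\ge 0$, we always have $\frac{\reserve}{v_b}\ge 0$. So we may apply the lemma with the weakest admissible choice $\lambda_{v_b}=0$. The approximation ratio is then at least
\[
\min_{1-\appbelow\le x\le 1,\ 0\le \ell\le 1} 1+x(1-\ell)\ln x \quad \text{s.t.}\quad \ell\ge g(x):=\frac{1-x+x\ln x}{\frac{x\appbelow}{1-\appbelow}+x\ln x}.
\]
Shrinking $\lambda_{v_b}$ only enlarges the feasible region, so this value lower-bounds the one the lemma gives for the true $\lambda_{v_b}$. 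Hence it is a valid lower bound.

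Next, for fixed $x$ we minimize over $\ell$. For $x\in(0,1]$ we have $\ln x\le 0$, so the objective $1+x\ln x-\ell\, x\ln x$ is nondecreasing in $\ell$. The inner minimum is therefore attained at the smallest feasible $\ell$, namely $\ell=\max\{0,g(x)\}$. This agrees with the ``moreover'' part of the lemma.

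Substituting $\ell=g(x)$ gives
\[
1+x\ln x - x\ln x\cdot \frac{1-x+x\ln x}{x\left(\frac{\appbelow}{1-\appbelow}+\ln x\right)} = 1+x\ln x-\frac{\ln x\,(1-x+x\ln x)}{\ln x+\frac{\appbelow}{1-\appbelow}},
\]
which is exactly the expression in the corollary.

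The main point to check is that $g(x)\in[0,1]$ on $[1-\appbelow,1]$, so that this substitution is legitimate; I expect this to be the only real obstacle. Write $c=\frac{\appbelow}{1-\appbelow}$.

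\emph{Numerator.} We have $1-x+x\ln x\ge 0$, since this function is convex with value $0$ at $x=1$.

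\emph{Denominator.} We need $c+\ln x>0$, i.e.\ $\ln x>-c$. Using $\ln x\ge 1-\frac1x$ and $x\ge 1-\appbelow$, we get $\ln x\ge 1-\frac{1}{1-\appbelow}=-c$. Equality can hold only at $x=1-\appbelow$, and there the claim holds after taking a limit, or the point is harmless because the lemma's own feasibility already requires $\ell\le 1$.

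\emph{Upper bound $g\le 1$.} This is equivalent to $1-x\le cx$, i.e.\ $x\ge 1-\appbelow$, which holds on the range.

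If $g(x)$ ever fell below $0$ or above $1$, we would instead use $\ell=0$ or declare $x$ infeasible. Either way the minimum can only be at least the stated expression, up to the boundary case already handled. Combining the three steps yields the corollary.
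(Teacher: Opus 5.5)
Your proof is correct and follows the same route as the paper: take $\lambda_{v_b}=0$, note that the minimum over $\ell$ is attained at the lower constraint $\ell=g(x)$, and substitute. (You justify the second step directly, from monotonicity of the objective in $\ell$ together with $0\le g\le 1$, whereas the paper cites the lemma's ``moreover'' clause plus $g\ge 0$.) Your hedge about equality in the denominator bound is unnecessary: $\ln x = 1-\frac{1}{x}$ holds only at $x=1$, which is not the left endpoint, so $\ln x + c>0$ holds strictly on the whole interval.
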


\begin{proof}[Proof of Corollary~\ref{cor:threshold-lemma-no-bound-on-v-b}]
 Let $\lambda_{v_b} = 0$. By the lemma, the approximation ratio of the buyer-offering mechanism with reserve $r$ (when the buyer's value is $v_b$) is at least the solution to the optimization problem:
 \begin{equation}\nonumber
 \min_{1-\appbelow \leq x \leq 1, 0 \leq \ell \leq 1}   1+x(1-\ell)\ln{x} \text{ s.t. } \ell \geq \frac{1-x+x\ln{x}}{\frac{x\cdot\appbelow}{1-\appbelow}+x\ln{x}}\\
\end{equation}
The lemma also states that the solution to this optimization problem is attained at the boundaries of the $\ell$ constraints. Simple algebra shows that $0 \leq \frac{1-x+x\ln{x}}{\frac{x\cdot\appbelow}{1-\appbelow}+x\ln{x}}$, since $x\geq 1-\appbelow$. Note that we get equality only when $x=1$. In this case, the objective value does not depend on $l$ at all. Therefore, we may assume that the solution is obtained when $\ell = \frac{1-x+x\ln{x}}{\frac{x\cdot\appbelow}{1-\appbelow}+x\ln{x}}$, which results in the following optimization problem:
$$
\min_{1-\appbelow \leq x\leq 1} 1+x\ln x- \frac{\ln x \left(1-x+x\ln x\right)}{\ln x +\frac{\appbelow}{1-\appbelow}}
$$
 \end{proof}

\begin{corollary}\label{cor:threshold-lemma-bound-on-v-b}
    Let $\distribution, \appbelow, \reserve, v_b$, and $\lambda_{v_b }> 0$ be as in the tradeoff lemma. The approximation ratio of the buyer-offering with reserve $r$ (conditioned on the buyer's value being $v_b$) is at least:
    $$
    \appabove(\appbelow, \lambda_{v_b}) = \begin{cases}
        1-\frac{(1-\lambda_{v_b})}{e} & \text{ if } 1-\appbelow \leq \frac{1}{e} \\
        1+(1-\appbelow)\cdot(1-\lambda_{v_b})\cdot\lnn{1-\appbelow} & \text{ if } 1-\appbelow > \frac{1}{e}
    \end{cases}
    $$
    
\end{corollary}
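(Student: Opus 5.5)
Corollary~\ref{cor:threshold-lemma-bound-on-v-b} follows from the tradeoff lemma by dropping the constraint $\ell \ge \frac{1-x+x\ln x}{\frac{x\cdot\appbelow}{1-\appbelow}+x\ln x}$. Removing a constraint from a minimization can only lower its value, so a lower bound on the relaxed problem is still a lower bound on the approximation ratio. Since $\lambda_{v_b}>0$ is given, we take it as the lower limit on $\ell$. The relaxed problem is
\[
\min_{1-\appbelow \le x \le 1,\ \lambda_{v_b}\le \ell \le 1} 1 + x(1-\ell)\ln x .
\]

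The first step is to eliminate $\ell$. For $x\in(0,1]$ we have $\ln x \le 0$, hence $x(1-\ell)\ln x \le 0$. This quantity is nonincreasing in $(1-\ell)$, so the minimum is attained at the smallest allowed $\ell$, namely $\ell=\lambda_{v_b}$. This also agrees with the ``boundary'' statement of the tradeoff lemma. After this substitution we must minimize
\[
g(x)=1+(1-\lambda_{v_b})\, x\ln x \quad \text{over } x\in[1-\appbelow,1].
\]

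The second step is a one-variable analysis. Since $1-\lambda_{v_b}\ge 0$, minimizing $g$ amounts to minimizing $h(x)=x\ln x$. We have $h'(x)=\ln x+1$, so $h$ is convex, decreasing on $(0,1/e]$ and increasing on $[1/e,1]$, with global minimum $h(1/e)=-1/e$. We then split into the two cases of the statement.
\begin{itemize}
\item If $1-\appbelow\le 1/e$, the point $1/e$ lies in the feasible interval. The minimum of $g$ is then $1-\frac{1-\lambda_{v_b}}{e}$.
\item If $1-\appbelow>1/e$, the whole interval $[1-\appbelow,1]$ lies in the increasing region of $h$. The minimum is at the left endpoint $x=1-\appbelow$, which gives $1+(1-\appbelow)(1-\lambda_{v_b})\ln(1-\appbelow)$.
\end{itemize}

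No step here is a real obstacle. The only points to check are that relaxing the second constraint is legitimate, because it only enlarges the feasible set of a minimization, and that the sign of $\ln x$ on $(0,1]$ is handled correctly. Together these ensure that taking $\ell=\lambda_{v_b}$ gives the minimum rather than the maximum.
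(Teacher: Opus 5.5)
Your proposal is correct and matches the paper's proof: you drop the second lower bound on $\ell$, set $\ell=\lambda_{v_b}$ because the objective is non-decreasing in $\ell$, and then minimize $x\ln x$ over the feasible interval, splitting on whether $1/e$ lies in it. The paper first distinguishes whether $\lambda_{v_b}$ dominates the other bound on $\ell$, but both of its cases reduce to the same relaxation you apply directly.
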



\begin{proof}[Proof of Corollary~\ref{cor:threshold-lemma-bound-on-v-b}]
    By the tradeoff lemma, the approximation ratio of the buyer-offering mechanism with reserve $r$ (when the buyer's value is $v_b$) is at least the solution to the optimization problem:
 \begin{equation}\nonumber
 \min_{1-\appbelow \leq x \leq 1, \lambda_{v_b} \leq \ell \leq 1}   1+x(1-\ell)\ln{x} \text{ s.t. } \ell \geq \frac{1-x+x\ln{x}}{\frac{x\cdot\appbelow}{1-\appbelow}+x\ln{x}}\\
\end{equation}
We consider two cases. In the first case, $\lambda \geq \frac{1-x+x\ln{x}}{\frac{x\cdot\appbelow}{1-\appbelow}+x\ln{x}}$. In this case, the constraints on $\ell$ reduce to $\ell \geq \lambda_{v_b}$. Since the objective function is monotone non-decreasing in $\ell$, the minimum is attained when $\ell$ is as small as possible, that is, 
\begin{equation}\label{problem:only-distance}
\min_{1-\appbelow \leq x \leq 1}   1+x(1-\lambda_{v_b})\ln{x}
\end{equation}
In the second case, $\lambda < \frac{1-x+x\ln{x}}{\frac{x\cdot\appbelow}{1-\appbelow}+x\ln{x}}$. Thus, retaining only the constraint $\ell \geq \lambda_{v_b}$ relaxes the problem and can only decrease the optimal value. As in the first case, this leads to the optimization problem in~\eqref{problem:only-distance}.
With the problem reduced to a single variable, we now solve it analytically.
Let $g(x) = 1+x(1-\lambda_{v_b})\ln{x}$. Its derivative is $g'(x)=(1-\lambda_{v_b})\cdot(\ln{x} +1)$. Since $\lambda_{v_b} < 1$ (as $v_b > \reserve$), the derivative equals $0$ only when $x = \frac{1}{e}$, and this value corresponds to a minimum of $g(x)$. Hence, if $1-\appbelow \leq \frac{1}{e}$, the solution to the optimization problem is attained at $x=\frac{1}{e}$, with value $1-\frac{(1-\lambda_{v_b})}{e}$. When $1-\appbelow > \frac{1}{e}$, the minimum is attained at $x=1-\appbelow$, with value $1+(1-\appbelow)\cdot(1-\lambda_{v_b})\cdot\lnn{1-\appbelow}$.
\end{proof}

\begin{proof}[Proof of Lemma~\ref{lemma:assymetric}]
We start with the following claim:


\begin{claim}\label{claim:price-equal-reserve}
There exists a distribution $\worsedis$ over the value of the seller with the following properties:
\begin{enumerate}
    \item\label{property-p=r} 
    When the distribution is $\worsedis$, $r$ has the highest profit for a buyer with value $v_b$ among all offers that are at least $\reserve$.
    \item When the distribution is $\worsedis$, the approximation ratio of the buyer-offering with reserve $\reserve$ when the buyer's value is $v_b$ is no better than the approximation ratio when the distribution is $\distribution$ at $v_b$.
    \item When the distribution is $\worsedis$ and the value of the buyer is $\reserve$, the approximation ratio of the mechanism that never trades is at most $\appbelow$.
    \item $\worsedis(v_b)=1$.
\end{enumerate}

\end{claim}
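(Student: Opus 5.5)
The plan is to obtain $\worsedis$ from $\distribution$ in two successive mass-moving operations. After each operation I will check properties 2 and 3. Properties 1 and 4 will come out of the construction.

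Throughout, use the following formulas at buyer value $v_b$, where $p$ is the offer made and $G$ is the seller's distribution:
\[
\mathrm{ALG}=v_b\Pr[v_s\le p]+\mathbb E[v_s\mathbf 1\{v_s>p\}],\qquad \mathrm{OPT}=v_b\Pr[v_s\le v_b]+\mathbb E[v_s\mathbf 1\{v_s>v_b\}].
\]
The buyer's profit from offer $p$ is $(v_b-p)G(p)$.

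Let $p^*\ge \reserve$ be the profit-maximizing offer of a buyer with value $v_b$ under $\distribution$. We may take $p^*\le v_b$, since any offer above $v_b$ gives negative profit, while the offer $\reserve<v_b$ gives nonnegative profit.

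\emph{Step 1 (truncation).} Move every unit of seller mass lying above $v_b$ down to the point $v_b$. This gives property 4.

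For offers $p<v_b$ the buyer's profit is unchanged. The offer $v_b$ still yields profit $0$. Hence $p^*$ remains optimal. At buyer value $v_b$, a seller value $x>v_b$ was never traded and was counted as $x$ in both ALG and OPT. It is now counted as $v_b$ in both, so both ALG and OPT drop by the same amount $x-v_b$.

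Since ALG$\le$OPT, subtracting a common nonnegative quantity from both cannot increase ALG/OPT. This gives property 2.

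For property 3, the no-trade ratio at buyer value $\reserve$ is $\mathbb E[v_s]/\mathbb E[\max(v_s,\reserve)]$. Moving mass from $x>v_b>\reserve$ to $v_b$ lowers the numerator and the denominator by the same amount. The ratio is at most $1$, so it weakly decreases and remains at most $\appbelow$.

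\emph{Step 2 (pushing traded mass to the reserve).} Move all mass lying in $(\reserve,p^*]$ to the point $\reserve$.

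I first check that $\reserve$ becomes optimal. Offering $\reserve$ now earns $(v_b-\reserve)G(p^*)$, which is at least the old maximal profit $(v_b-p^*)G(p^*)$. An offer $p\in(\reserve,p^*]$ now earns $(v_b-p)G(p^*)$, which is at most the profit of offering $\reserve$. An offer $p>p^*$ earns exactly what it earned before, which is at most the old maximum. This gives property 1, breaking ties toward $\reserve$.

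The welfare at $v_b$ is unchanged. The moved mass was traded before and is still traded (the offer is now $\reserve$). It lies below $v_b$ both before and after the move, so it contributes $v_b$ to ALG and $v_b$ to OPT in both cases. Hence property 2 is preserved.

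For property 3, the same argument as in Step 1 applies. Moving mass from $x\in(\reserve,p^*]$ down to $\reserve$ reduces both $\mathbb E[v_s]$ and $\mathbb E[\max(v_s,\reserve)]$ by exactly $x-\reserve$. This cannot increase a ratio that is at most $1$.

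The only delicate points I expect are the following. First, the order of operations: truncation must not change the buyer's optimal offer, which holds because profits for offers below $v_b$ are unchanged. Second, the tie-breaking convention in Step 2, which is needed to declare $\reserve$ the chosen offer. Third, checking that each "subtract the same amount from numerator and denominator" step is applied to a ratio that is at most $1$.
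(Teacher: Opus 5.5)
Your proof is correct and follows essentially the same route as the paper: collapse the seller mass in $(r,p^*]$ onto $r$ so that $r$ becomes the buyer's optimal offer, and eliminate all seller mass above $v_b$. The paper does these two steps in the opposite order and sends the mass above $v_b$ to $r$ rather than to $v_b$, which works equally well. Your checks of properties 2 and 3 are more careful than the paper's, which only argues that the optimum and $\mathbb{E}[v_s]$ do not increase. You show instead that the collapsing step leaves both the welfare and the optimum at $v_b$ unchanged, and that subtracting a common amount from the numerator and denominator of a ratio at most $1$ cannot increase it.
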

\begin{proof}[Proof of Claim \ref{claim:price-equal-reserve}]
Let $\price\geq \reserve$ be the optimal offer of a buyer with value $v_b$ in the buyer-offering with reserve $\reserve$ mechanism when the seller's distribution is $\distribution$. If $\price = \reserve$ then property \ref{property-p=r} holds. Assume that $\price > \reserve$. Since $\price > \reserve$, $\distribution(\price) > \distribution(\reserve)$. We construct a distribution $\constructiondis$ that maintains the bounds on the approximation ratio and satisfies property \ref{property-p=r}. $\constructiondis$ is the same as $\distribution$ except on the interval $[\reserve,\price]$, where we shift all the seller's mass to the point $r$, so $\constructiondis(\reserve) = \constructiondis(\price)$. 

Note that when the buyer's value is $v_b$, the optimal offer when the distribution of the seller is $\constructiondis$ is $\reserve$: shifting the mass in $[\reserve,\price]$ to $\reserve$ does not change the profits of any offer that is at least $\price$, so $\price$ still dominates them, while now it is clear the $\reserve$ is more profitable than $\price$ as the probability of sale is identical but the amount the buyer gets in every sale is strictly larger for $\reserve$. Furthermore, the approximation ratio at $v_b$ did not improve, as the welfare of the buyer-offering mechanism with reserve $\reserve$ remains the same and the optimal welfare has not increased. 
Similarly, for when the value of the buyer is $\reserve$, the approximation ratio of the mechanism that never trades the item could only get worse, since there is no trade and we have not increased $E[v_s]$. 

For the last property $\worsedis(v_b) =1$. If $\distribution(v_b) = \constructiondis(v_b) < 1$, we shift the seller's mass above $v_b$ to $v_s = \reserve$. This does not improve the approximation ratio of the buyer-offering with reserve $\reserve$ mechanism when the buyer value is $v_b$, since for these seller values every mechanism achieves the optimal allocation. It also does not improve the approximation ratio when $v_b =\reserve$, as it only reduces $E[v_s]$. The resulting distribution is $\worsedis$. 
\end{proof}

With Claim~\ref{claim:price-equal-reserve} in hand, we analyze the approximation ratio when the buyer's value is $v_b$. We start with a bound on the expected value of the seller when its value is at least $\reserve$ (Claim~\ref{claim:bound-on-seller-val}). We essentially show that the worst case for our analysis, i.e., the value of the buyer where the smallest approximation ratio is achieved, is when the seller's distribution is an equal profit distribution for a buyer with value $v_b$. In this distribution, all offers $\price > \reserve$ with $\worsedis(\price)< 1$ yield the same profit for a buyer with value $v_b$.

\begin{claim}\label{claim:bound-on-seller-val}
    $\mathbb E_{v_s\sim \worsedis}[\indicator_{[ \reserve < v_s \leq v_b]}\cdot v_s]
    \geq \worsedis(\reserve)(v_b-\reserve)\lnn{\worsedis(\reserve)}
    + v_b\cdot(1 - \worsedis(\reserve)).$
\end{claim}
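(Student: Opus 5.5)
We write $G=\worsedis$ and use property~1 of Claim~\ref{claim:price-equal-reserve}: for a buyer with value $v_b$, the offer $\reserve$ is profit-maximizing among all offers of at least $\reserve$. For every $p\in(\reserve,v_b]$ this gives
\[
(v_b-p)\,G(p)\;\le\;(v_b-\reserve)\,G(\reserve),
\qquad\text{hence}\qquad
G(p)\;\le\;\min\Big\{1,\ \frac{(v_b-\reserve)G(\reserve)}{v_b-p}\Big\}.
\]
The plan is to convert this pointwise upper bound on the CDF into a lower bound on $\mathbb E[\indicator_{[\reserve<v_s\le v_b]}v_s]$. By property~4, $G(v_b)=1$, so all remaining mass lies in $[0,v_b]$. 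Intuitively, the expectation is smallest when mass sits as low as possible, i.e., when $G$ is as large as allowed. The extremal distribution is therefore the equal-profit one, $G(p)=\min\{1,(v_b-\reserve)G(\reserve)/(v_b-p)\}$.

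The first step is the layer-cake formula on $(\reserve,v_b]$. Since $G(v_b)=1$,
\[
\mathbb E[\indicator_{[\reserve<v_s\le v_b]}v_s] \;=\; v_b\cdot 1-\reserve\, G(\reserve)-\int_{\reserve}^{v_b}G(p)\,dp .
\]
This follows from integration by parts for a Stieltjes integral: $\int_{(\reserve,v_b]}v\,dG(v)=v_bG(v_b)-\reserve G(\reserve)-\int_\reserve^{v_b}G$. Because the integral enters with a minus sign, substituting the upper bound on $G$ gives a valid lower bound on the expectation.

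The second step computes the integral of the bound. Write $g=G(\reserve)$ and $d=v_b-\reserve$. The bound $gd/(v_b-p)$ reaches $1$ at $p^*=v_b-gd$, which lies in $[\reserve,v_b]$ since $g\le 1$. Then
\[
\int_\reserve^{v_b}\min\{1,\tfrac{gd}{v_b-p}\}\,dp=\int_\reserve^{p^*}\frac{gd}{v_b-p}\,dp+(v_b-p^*)=gd\ln\frac{d}{gd}+gd=-gd\ln g+gd .
\]
Plugging this into the first step gives
\[
v_b-\reserve g+gd\ln g-gd = v_b - g v_b + g(v_b-\reserve)\ln g,
\]
because $\reserve g+gd=gv_b$. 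This is exactly $g(v_b-\reserve)\ln g+v_b(1-g)$, the claimed bound.

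The main obstacle is being careful at the endpoints. Atoms at $\reserve$ must be excluded, since the indicator requires $v_s>\reserve$, and the integration-by-parts formula should use $G(\reserve)$ including the atom at $\reserve$. Handling this consistently, either via the Stieltjes formulation or by writing $\mathbb E[X\indicator_{A}]=\int_0^{v_b}\Pr[\max(X,\reserve)\indicator\ldots]$, is routine once done. The case $g=0$ needs a separate look: there the bound reads $v_b$, with the convention $0\ln 0=0$, and it holds because profit $0$ at $\reserve$ forces $G(p)=0$ for all $p<v_b$, so all mass sits at $v_b$.
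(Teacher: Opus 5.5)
Your proof is correct and follows essentially the same route as the paper. Both use integration by parts to write the expectation as $v_b - rG(r) - \int_r^{v_b} G(p)\,dp$, then bound the CDF pointwise by the equal-profit envelope $\min\{1,(v_b-r)G(r)/(v_b-p)\}$ and integrate. Your write-up is slightly more careful than the paper's: you derive that pointwise bound explicitly from property~1 of Claim~\ref{claim:price-equal-reserve}, you use the Stieltjes form rather than a density so atoms are handled correctly, and you check the degenerate case $G(r)=0$ separately.
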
   

\begin{proof}[Proof of Claim~\ref{claim:bound-on-seller-val}]   
    Let $\worsedispdf$ be the probability density function of the distribution $\worsedis$, and let
    $\utility = (v_b-\reserve)\cdot\worsedis(\reserve)$ be the profit of the buyer with value $v_b$
    from making an offer $\reserve$. Then
        \begin{equation*}
        \begin{aligned}[b]
        \mathbb E_{v_s\sim \worsedis}[\indicator_{[ \reserve < v_s \leq v_b]}\cdot v_s]
        & = \int_{\reserve}^{v_b} v_s\cdot \worsedispdf(v_s)\, \d v_s \\
        & \underbrace{=}_{\text{integration by parts}}
        \eval{v_s\worsedis(v_s)}_\reserve^{v_b}
        -\int_\reserve^{v_b} \worsedis(v_s)\, \d v_s \\
        & = v_b -\reserve\cdot \worsedis(\reserve)
        -\int_\reserve^{v_b} \worsedis(v_s)\, \d v_s \\
        & \geq v_b -\reserve\cdot \worsedis(\reserve)
        -\int_\reserve^{v_b-\utility} \frac{\utility}{v_b-v_s}\, \d v_s
        -\int_{v_b-\utility}^{v_b} 1\, \d v_s \\
        & = v_b -\reserve\cdot \worsedis(\reserve)
        -\utility
        +\eval{\utility\cdot \lnn{(v_b-v_s)}}_{\reserve}^{v_b-\utility} \\
        & = v_b -\reserve\cdot \worsedis(\reserve)
        -\utility
        +\utility\cdot \lnn{\frac{\utility}{v_b-\reserve}} \\
        & = v_b -\reserve\cdot \worsedis(\reserve)
        -\utility
        +\worsedis(\reserve)(v_b-\reserve)\lnn{\worsedis(\reserve)} \\
        & = \worsedis(\reserve)(v_b-\reserve)\lnn{\worsedis(\reserve)}
        + v_b\cdot(1 - \worsedis(\reserve)).
        \end{aligned}
    \end{equation*}
\end{proof}

    Now, the approximation ratio of the buyer-offering with reserve $\reserve$ when the value of the buyer is $v_b$ is at least:

\begin{equation}\label{ineq:app-at-b}
\begin{aligned}[t]
    \frac{\bor{\reserve}{\worsedis}(v_b)}{\opt{\worsedis}(v_b)}
    & = \frac{v_b\cdot\worsedis(\reserve)
    + \mathbb E_{v_s\sim \worsedis}[\indicator_{[ \reserve < v_s \leq v_b]}\cdot v_s]}{v_b}  \geq \frac{v_b+ (v_b-\reserve)\cdot\worsedis(\reserve)\lnn{\worsedis(\reserve)}}{v_b}  \\
    & = 1+\worsedis(\reserve)\Bigl(1-\frac{\reserve}{v_b}\Bigr)\lnn{\worsedis(\reserve)} .
\end{aligned}
\end{equation}

    We are almost ready to state our optimization problem, but first we need to rewrite the constraint about the approximation ratio of a buyer with value $\reserve$ being at most $\appbelow$.

    \begin{claim}\label{claim:app-at-r}
        Suppose that the item is never traded for all possible values of the seller and that the value of the buyer is $\reserve$. Let $\appbelow$ be an upper bound on the approximation ratio in this case. Then:
        $$
        \frac{\appbelow}{1-\appbelow}\cdot\frac{\reserve}{v_b}\cdot\worsedis(\reserve) \geq \worsedis(\reserve)\cdot(1-\frac{\reserve}{v_b})\cdot\lnn{\worsedis(\reserve)} + 1 -\worsedis(\reserve)
        $$
    \end{claim}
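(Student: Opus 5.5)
The plan is to write out the definition of the approximation ratio at buyer value $\reserve$ when the mechanism never trades, under the distribution $\worsedis$ from Claim~\ref{claim:price-equal-reserve}. Property 3 of that claim gives that this ratio is at most $\appbelow$. The numerator is $\mathbb E_{v_s\sim\worsedis}[v_s]$. The denominator is $\opt{\worsedis}(\reserve)=\reserve\cdot\worsedis(\reserve)+\mathbb E[\indicator_{[v_s>\reserve]}\cdot v_s]$. By property 4 ($\worsedis(v_b)=1$), the event $v_s>\reserve$ coincides with $\reserve<v_s\le v_b$. To lighten notation, write $A=\mathbb E[\indicator_{[v_s\le \reserve]}\cdot v_s]\ge 0$ and $B=\mathbb E[\indicator_{[\reserve<v_s\le v_b]}\cdot v_s]$.

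With this notation the hypothesis reads
\[
\frac{A+B}{\reserve\,\worsedis(\reserve)+B}\le \appbelow .
\]
Dropping $A\ge 0$ from the numerator keeps the inequality true, so $B\le \appbelow\bigl(\reserve\,\worsedis(\reserve)+B\bigr)$. This rearranges to
\[
B(1-\appbelow)\le \appbelow\,\reserve\,\worsedis(\reserve), \qquad\text{that is,}\qquad B\le \frac{\appbelow}{1-\appbelow}\,\reserve\,\worsedis(\reserve).
\]

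Next, Claim~\ref{claim:bound-on-seller-val} provides the lower bound $B\ge \worsedis(\reserve)(v_b-\reserve)\lnn{\worsedis(\reserve)}+v_b(1-\worsedis(\reserve))$. Chaining it with the upper bound gives
\[
\frac{\appbelow}{1-\appbelow}\,\reserve\,\worsedis(\reserve)\ge \worsedis(\reserve)(v_b-\reserve)\lnn{\worsedis(\reserve)}+v_b\bigl(1-\worsedis(\reserve)\bigr).
\]
Dividing both sides by $v_b>0$ yields exactly the stated inequality.

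The only delicate point is the first step. One has to justify that the denominator is correctly decomposed, using $\worsedis(v_b)=1$ so that no seller mass lies above $v_b$. One also has to justify that discarding $A$ goes in the right direction. Since $A\ge 0$ is removed from the numerator only, the ratio decreases, so the upper bound $\appbelow$ still holds. Everything else is algebraic rearrangement, which is valid because $1-\appbelow>0$.
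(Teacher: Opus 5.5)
Your proof is correct and follows essentially the same route as the paper's: write out the no-trade ratio at buyer value $\reserve$, discard the nonnegative contribution of low seller values, rearrange using $1-\appbelow>0$, apply Claim~\ref{claim:bound-on-seller-val}, and divide by $v_b$. The differences are cosmetic. You drop $A$ before rearranging rather than after. Your split of the seller mass into $v_s\le\reserve$ versus $v_s>\reserve$ also handles the atom at $\reserve$ slightly more cleanly than the paper's $v_s<\reserve$ versus $v_s\ge\reserve$ split.
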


    \begin{proof}[Proof of Claim~\ref{claim:app-at-r}]
        The approximation ratio of a buyer with value $\reserve$ is at most $\appbelow$, hence:
    \begin{align*}
        \appbelow & \geq   \frac{\mathbb E_{v_s\sim \worsedis}[v_s]}{\reserve\cdot\worsedis(\reserve)+\mathbb E_{v_s\sim \worsedis}[v_s \cdot \indicator_{[v_s \geq \reserve]}]} =  1+  \frac{\mathbb E_{v_s\sim \worsedis}[v_s \cdot \indicator_{[v_s < \reserve]}]-\reserve\cdot\worsedis(\reserve)}{\reserve\cdot\worsedis(\reserve)+ \mathbb E_{v_s\sim \worsedis}[v_s \cdot \indicator_{[v_s \geq \reserve]}]} \\
        & \iff (1-\appbelow) \leq  \frac{\reserve\cdot\worsedis(\reserve) - \mathbb E_{v_s\sim \worsedis}[v_s \cdot \indicator_{[v_s < \reserve]}]}{\reserve\cdot\worsedis(\reserve)+ \mathbb E_{v_s\sim \worsedis}[v_s \cdot \indicator_{[v_s \geq \reserve]}]}\\
        & \iff (1-\appbelow)\cdot \mathbb E_{v_s\sim \worsedis}[v_s \cdot \indicator_{[v_s \geq \reserve]}] + \mathbb E_{v_s\sim \worsedis}[v_s \cdot \indicator_{[v_s < \reserve]}] \leq \appbelow\cdot\reserve\cdot\worsedis(\reserve)
    \end{align*}
    Using Claim~\ref{claim:bound-on-seller-val} we get that:
    \begin{align*}
    \appbelow\cdot\reserve\cdot\worsedis(\reserve) & \geq (1-\appbelow)\cdot \mathbb E_{v_s\sim \worsedis}[v_s \cdot \indicator_{[v_s \geq \reserve]}] \geq (1-\appbelow)\cdot \left( \worsedis(\reserve)({v_b}-\reserve)\lnn{\worsedis(\reserve)} + {v_b}\cdot(1 - \worsedis(\reserve)) \right) \nonumber\\
    & \iff \frac{\appbelow}{1-\appbelow}\cdot\frac{\reserve}{v_b}\cdot\worsedis(\reserve) \geq \worsedis(\reserve)\cdot(1-\frac{\reserve}{v_b})\cdot\lnn{\worsedis(\reserve)} + 1 -\worsedis(\reserve)
    \end{align*}

    \end{proof}
    
Let $x= \worsedis(\reserve)$, $\ell = \frac{\reserve}{v_b}$. By Inequality~\eqref{ineq:app-at-b} and Claim~\ref{claim:app-at-r}, we get that the solution to the following minimization problem bounds $\appabove$ from below.

\begin{equation}\label{program:min-app}
\begin{aligned}
 & \min_{1-\appbelow \leq \distribution(\reserve) \leq x \leq 1, 0 < l\leq 1}   \Phi(x,l) := 1+x(1-\ell)\ln{x} \\
   & \hspace{10mm} \text{ s.t. }  \frac{\appbelow}{1-\appbelow}\cdot\ell x \geq 1-x + x(1-\ell)\ln{x} \text{ and } \ell \geq \lambda_{v_b}\\
\end{aligned}
\end{equation}

To see why $x\geq 1-\appbelow$, recall that the approximation ratio when $v_b = \reserve$ and the item is never traded is at most $\appbelow$. Therefore:

\begin{align*}
    \appbelow & \geq \frac{\mathbb E_{v_s \sim \worsedis}[v_s]}{\opt{\worsedis}(\reserve)} \geq \frac{\mathbb E_{v_s\sim \distribution}[\indicator_{[ v_s > \reserve]}\cdot v_s]}{\reserve\worsedis(\reserve) + \mathbb E_{v_s\sim \distribution}[\indicator_{[ v_s > \reserve]}\cdot v_s]} = 1-\frac{\reserve\worsedis(\reserve)}{\reserve\worsedis(\reserve) + \mathbb E_{v_s\sim \distribution}[\indicator_{[ v_s > \reserve]}\cdot v_s]}\\
    & \geq 1-\frac{\reserve\worsedis(\reserve)}{\reserve\worsedis(\reserve) + \reserve(1-\worsedis(\reserve))} = 1-\worsedis(\reserve)
\end{align*}

Hence, $x=\worsedis(\reserve) \geq 1-\appbelow$. Next, we analyze this optimization problem and show that its minimal value is obtained on the boundaries of the constraint on $\ell$ (Claim~\ref{claim:properties-of-opt-problem}). 

\begin{claim}\label{claim:properties-of-opt-problem}
    The optimization problem in \eqref{program:min-app} has a minimal value and satisfies these conditions:
    \begin{enumerate}
        \item The constraint $ \frac{\appbelow}{1-\appbelow}\cdot\ell x \geq 1-x + x(1-\ell)\ln{x}$ is equivalent to $\ell\geq  \frac{1-x+x\ln{x}}{\frac{x\cdot\appbelow}{1-\appbelow}+x\ln{x}} $ for $x \in [1-\appbelow, 1]$.
        \item The minimum is obtained when one of these two constraints on $\ell$ are met with equality:
        \begin{itemize}
            \item $\ell \geq \lambda_{v_b}$
            \item $\ell\geq  \frac{1-x+x\ln{x}}{\frac{x\cdot\appbelow}{1-\appbelow}+x\ln{x}} $
        \end{itemize}
    \end{enumerate}
\end{claim}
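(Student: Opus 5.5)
The plan is to prove the two parts of the claim in order, and then argue existence of the minimum; the monotonicity of $\Phi$ in $\ell$ will do most of the work. Throughout, write $\beta=\frac{\appbelow}{1-\appbelow}>0$ and restrict to $x\in[1-\appbelow,1]$.

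\textbf{Part 1 (equivalence of the constraint).} Moving all terms containing $\ell$ to the left, the constraint $\beta\ell x \ge 1-x+x(1-\ell)\ln x$ becomes
\[
\ell\,\bigl(\beta x + x\ln x\bigr)\;\ge\; 1-x+x\ln x .
\]
So the key step is to show that the coefficient $x(\beta+\ln x)$ is strictly positive on $[1-\appbelow,1]$; then we may divide by it. Since $\ln$ is increasing, $\ln x\ge \ln(1-\appbelow)$. The elementary inequality $\ln y\ge 1-\frac1y$, applied with $y=1-\appbelow$, gives $\ln(1-\appbelow)\ge -\frac{\appbelow}{1-\appbelow}=-\beta$. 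The inequality is strict because $\appbelow>0$, so $y\neq1$. Hence $\beta+\ln x>0$, and dividing yields exactly $\ell\ge \frac{1-x+x\ln x}{\beta x+x\ln x}$.

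Before Part 2, I record two facts about this ratio, call it $R(x)$.
\begin{itemize}
\item $R(x)\ge 0$. This holds because $1-x+x\ln x\ge 0$ (it is $x(\frac1x-1-\ln\frac1x)\ge0$, with equality only at $x=1$).
\item $R(x)\le 1$. This is equivalent to $1-x\le \beta x$, i.e.\ to $x\ge 1-\appbelow$, which holds by assumption.
\end{itemize}
It follows that $\ell=1$ is always feasible as far as this constraint is concerned. Together with $\lambda_{v_b}\le 1$, the feasible set is nonempty for every admissible $x$.

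\textbf{Part 2 (the minimum is on the boundary of the $\ell$ constraints).} For fixed $x$, the objective $\Phi(x,\ell)=1+x(1-\ell)\ln x$ is affine in $\ell$ with slope $-x\ln x\ge 0$, so it is nondecreasing in $\ell$. Hence for each $x$ the best choice is the smallest feasible value,
\[
\ell^*(x)=\max\{\lambda_{v_b},R(x)\},
\]
which lies in $[0,1]$ by the two facts above. This choice makes one of the two constraints tight. Therefore any minimizer can be replaced by $(x,\ell^*(x))$ without increasing $\Phi$.

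\textbf{Existence of the minimum.} The only subtlety is that the program as written allows $\ell\in(0,1]$, which is not closed. I would pass to the compact domain $[1-\appbelow,1]\times[0,1]$, on which $\Phi$ is continuous (since $x\ge 1-\appbelow>0$) and the constraints are closed, so a minimum exists there. To see that this does not change the optimum, note that $\ell=0$ is feasible only when $R(x)=0$, i.e.\ only at $x=1$. At that point $\Phi=1$, which is also attained at $(1,1)$ with $\ell>0$. So the infimum over the original domain is attained and equals the minimum over the closure.

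I expect the only real obstacle to be the positivity of the denominator $\beta+\ln x$. Without it the division in Part 1 could flip the inequality, and it is exactly where the hypothesis $x\ge 1-\appbelow$ is needed.
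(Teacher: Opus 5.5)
Your proof is correct and follows the paper's argument. Part 1 is the same division by $x\bigl(\frac{\appbelow}{1-\appbelow}+\ln x\bigr)>0$, and your use of $\ln y\ge 1-\frac{1}{y}$ supplies exactly the inequality $1-\appbelow>e^{-\appbelow/(1-\appbelow)}$ that the paper asserts without proof. Part 2 rests on the same fact $\partial\Phi/\partial\ell=-x\ln x\ge 0$, and you use it more cleanly than the paper does: monotonicity in $\ell$ for each fixed $x$ replaces the paper's ``no interior critical point, so check the boundary faces'' reasoning, which removes the need to exclude the face $x=1-\appbelow$ separately. Your bounds $0\le R(x)\le 1$ and your handling of the open constraint $\ell>0$ also fill in the existence step, which the paper only sketches.
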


\begin{proof}[Proof of Claim~\ref{claim:properties-of-opt-problem}]
First, we analyze the constraint in the optimization problem~\eqref{program:min-app}:

\begin{align*}
    \ell \left( x\cdot\frac{\appbelow}{1-\appbelow} +x\ln{x}\right) \geq 1-x+x\ln{x} \iff \ell \geq \frac{1-x+x\ln{x}}{x\cdot\frac{\appbelow}{1-\appbelow} +x\ln{x}} \text{ when } x\cdot\frac{\appbelow}{1-\appbelow} +x\ln{x} > 0
\end{align*}

We show that for $\appbelow \in (0,1)$ it must be that $\frac{\appbelow}{1-\appbelow} +\ln{x} > 0$, which will imply the first part of this claim as $x>0$. Since $\ln{x}$ is a monotone increasing function, our condition holds if $x>e^{\frac{-\appbelow}{1-\appbelow}}$. 
We observe this by recalling that $x\geq 1-\appbelow$, then since $1-\appbelow-e^{\frac{-\appbelow}{1-\appbelow}} >0 $ for $\appbelow \in (0,1)$ we would get that $x> e^{\frac{-\appbelow}{1-\appbelow}}$ for $x$ in our region. 
    
For the second part of this claim, note that our objective function $\Phi(x,l) := 1+x(1-\ell)\ln{x}$ is continuous in $x \in [1-\appbelow, 1], \ell \in [0,1]$, since it is a sum and product of continuous functions. Therefore, it attains a minimum on the compact set $ [1-\appbelow, 1] \times [0,1]$.  
Note that $ \frac{\partial \Phi}{\partial \ell} = -x\ln{x}$, and so the partial derivative of $\Phi$ with respect to $\ell$ is non-zero in our domain $x\in [1-\alpha,1]$ (note that if $x=1$ we get the optimal ratio of $1$ so we can assume that $x<1$). Hence, any extreme points must be on the boundary, i.e., $\ell =  \frac{1-x+x\ln{x}}{\frac{x\cdot\appbelow}{1-\appbelow}+x\ln{x}} $, or $x = 1-\appbelow$ or $\ell = \lambda_{v_b}$.
To conclude the proof, we need to rule out the case that $x=1-\appbelow$. If $x=1-\appbelow$, the constraint on $\ell$ becomes:

\begin{align*}
    \ell \geq \frac{1-x+x\ln{x}}{\frac{x\cdot\appbelow}{1-\appbelow}+x\ln{x}} = \frac{1-(1-\appbelow)+(1-\appbelow)\lnn{1-\appbelow}}{\frac{(1-\appbelow)\cdot\appbelow}{1-\appbelow}+(1-\appbelow)\lnn{1-\appbelow}} =1
\end{align*}

Recall that $\ell \leq 1$. Therefore, $\ell =1$ and our objective becomes $\Phi=1$ which is a maximum point. We conclude that $x=1-\appbelow$ cannot be a minimum point.
\end{proof}

\end{proof}

\subsection{The Neighborhood Lemma}

\begin{lemma}[Neighborhood Lemma]\label{lemma:distance}
    Let $0 < r_1 < r_2 $ be such that 
    $$\frac{\mathbb E_{v_s \sim \distribution}[v_s]}{\opt{\distribution}(r_1)} = \alpha_1, \quad 1>\frac{\mathbb E_{v_s \sim \distribution}[v_s]}{\opt{\distribution}(r_2)} \geq  \alpha_2$$
    With $\alpha_1 >0$, then, 
    $$
    \frac{r_1}{r_2} \geq \frac{\alpha_2(1-\alpha_1)}{\alpha_1(1-\alpha_2)}
    $$
\end{lemma}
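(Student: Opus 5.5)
The plan is to rewrite both ratios in terms of the single function
\[
g(t) \;=\; \opt{\distribution}(t) - \mathbb E_{v_s\sim\distribution}[v_s] \;=\; \mathbb E_{v_s\sim\distribution}\bigl[(t-v_s)^+\bigr],
\]
the expected amount by which the buyer's value $t$ exceeds the seller's. This identity holds because $\max\{v_s,t\} = v_s + (t-v_s)^+$. The statement will then follow from one structural fact about $g$.

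\emph{Step 1: translate the hypotheses.} Write $\mu=\mathbb E[v_s]$. Since $\alpha_1>0$, we have $\mu>0$. The hypothesis at $r_1$ gives $\opt{\distribution}(r_1)=\mu/\alpha_1$, and hence
\[
g(r_1)=\mu\,\frac{1-\alpha_1}{\alpha_1}.
\]
The hypothesis at $r_2$ gives $\opt{\distribution}(r_2)\le \mu/\alpha_2$, and hence
\[
g(r_2)\le \mu\,\frac{1-\alpha_2}{\alpha_2}.
\]
The assumption that the ratio at $r_2$ is strictly below $1$ means $g(r_2)>0$. In particular $\alpha_2<1$, so the right-hand side is well defined and positive.

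\emph{Step 2: the structural fact.} The map $t\mapsto (t-v_s)^+$ is convex in $t$ for each fixed $v_s$. Taking expectations, $g$ is convex. Because the support lies in $[0,1]$, we have $v_s\ge 0$, so $g(0)=0$. A convex function vanishing at $0$ has nondecreasing slope $g(t)/t$ on $t>0$. For $0<r_1<r_2$ this gives
\[
\frac{g(r_1)}{r_1}\le \frac{g(r_2)}{r_2},
\qquad\text{i.e.}\qquad
\frac{r_1}{r_2}\ge \frac{g(r_1)}{g(r_2)},
\]
using $g(r_2)>0$.

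\emph{Step 3: combine.} Plug in the exact value of $g(r_1)$ and the upper bound on $g(r_2)$ from Step 1:
\[
\frac{r_1}{r_2}\;\ge\; \frac{g(r_1)}{g(r_2)}\;\ge\; \frac{\mu(1-\alpha_1)/\alpha_1}{\mu(1-\alpha_2)/\alpha_2}\;=\;\frac{\alpha_2(1-\alpha_1)}{\alpha_1(1-\alpha_2)}.
\]

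The main point to get right is Step 2: recognizing that $\opt{\distribution}(t)-\mu$ is convex and vanishes at the origin, which uses nonnegativity of seller values. The remaining care is in the directions of the inequalities: the hypothesis at $r_2$ is only a lower bound on the ratio, so it must be turned into an upper bound on $g(r_2)$, which is exactly the direction needed in the denominator. Everything else is routine algebra.
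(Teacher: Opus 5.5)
Your proof is correct and matches the paper's argument step for step. The paper also writes the no-trade ratio through the gains-from-trade function $G(t)=\opt{\distribution}(t)-\mathbb E[v_s]$, uses convexity of $G$ with $G(0)=0$ to get $G(r_1)\le \frac{r_1}{r_2}G(r_2)$, and substitutes $G=\mu\frac{1-A}{A}$. The only difference is how convexity is shown: you write $G$ as an expectation of the convex maps $t\mapsto (t-v_s)^+$, while the paper integrates by parts to get $G'=\distribution$, so your version does not need to assume a density.
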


\begin{proof}
    Let $G(v_b) = \int_{0}^{v_b} (v_b -v_s)\distributionpdf(v_s) \d v_s $ be the optimal gains-from-trade when the buyer's value is $v_b$. Let $A(v_b)$ be the approximation ratio when the buyer's value is $v_b$ and the item is never traded, i.e.,  $A(v_b) = \frac{\mathbb E_{v_s \sim \distribution}[v_s]}{v_b\cdot\distribution(v_b) + \mathbb E_{v_s \sim \distribution}[v_s\cdot \indicator_{[v_s > v_b]}]}$.
    We get:
    \begin{equation}
        \begin{aligned}
                A(v_b) & = \frac{\mathbb E_{v_s \sim \distribution}[v_s]}{v_b\cdot\distribution(v_b) + \mathbb E_{v_s \sim \distribution}[v_s] -\mathbb E_{v_s \sim \distribution}[v_s\cdot \indicator_{[v_s \leq v_b]}]} = \frac{\mathbb E_{v_s \sim \distribution}[v_s]}{\mathbb E_{v_s \sim \distribution}[v_s] + G(v_b)} \\
                & \underbrace{\iff}_{\mathbb E_{v_s \sim \distribution}[v_s] > 0} G(v_b) = \mathbb E_{v_s \sim \distribution}[v_s]\cdot\frac{1-A(v_b)}{A(v_b)} \nonumber
        \end{aligned}
    \end{equation}
    Where we used $\mathbb E_{v_s \sim \distribution}[v_s] > 0$ as otherwise this expectation is $0$ and thus $\alpha_1 = \alpha_2 =0$, which violates the conditions of the lemma. We observe that $G(\cdot)$ is convex:
    \begin{claim}\label{claim:g-convex}
        The function $G(\cdot)$ is convex in $\mathbb R_{\geq 0}$.
    \end{claim}
    \begin{proof}[Proof of Claim~\ref{claim:g-convex}]
        Recall the definition $G(v_b)=\int_0^{v_b} (v_b-v_s)\distributionpdf(v_s) \d v_s$.
        \begin{align*}
        G(v_b) & = v_b \cdot \distribution(v_b) -\int_{0}^{v_b} v_s\distributionpdf(v_s) \d v_s \underbrace{=}_{\text{integration by parts}} v_b \cdot \distribution(v_b) -\left(\eval{v_s\cdot\distribution(v_s)}_{0}^{v_b} -\int_{0}^{v_b} \distribution(v_s) \d v_s\right) \\
        & = \int_{0}^{v_b} \distribution(v_s) \d v_s
        \end{align*}
        By the fundamental theorem of calculus, we have $G'(v_b) = \distribution(v_b)$. Since $\distribution(v_b)$ is monotonically non-decreasing, we get that $G(\cdot)$ is convex.   
    \end{proof}

    Let $\lambda = \frac{r_1}{r_2}$. $\lambda < 1 $ since $r_1 < r_2$. By the convexity of $G$, we get $G(r_1) = G((1-\lambda)\cdot 0 + \lambda\cdot r_2) \leq (1-\lambda)\cdot G(0)+ \lambda\cdot G(r_2) = \lambda\cdot G(r_2)$, where we used $G(0)=0$. Hence,
    \begin{equation*}
        \frac{r_1}{r_2} \geq \frac{G(r_1)}{G(r_2)} = \frac{\left(1-A(r_1)\right)A(r_2)}{A(r_1)\left(1-A(r_2)\right)} \geq \frac{\alpha_2(1-\alpha_1)}{\alpha_1(1-\alpha_2)}
    \end{equation*}
    In the first inequality we use that $G(r_2) >0$, since $A(r_2)<1$. The second inequality follows because  $\frac{A(r_2)}{1-A(r_2)}$ is a monotone increasing function in $A(r_2)$ for $A(r_2) \in [0,1]$.
    \end{proof}

%% file: seller-only.tex
\section{Mechanisms with No Information about the Buyer}\label{sec-seller-only}

We now construct mechanisms that guarantee a good fraction of the optimal social welfare for every distribution of the buyer $\mathcal F_b$. Specifically, for every possible value of the buyer $v_b$, the approximation ratio (conditioned on the value of the buyer being $v_b$) is guaranteed to be at least $0.71$. In Section \ref{sec-main-result}, we guarantee a larger fraction of the optimal welfare, but this improvement comes at the cost that this fraction is guaranteed only in expectation, also over the distribution of the buyer. That is, for some values $v_b$ of the buyer, the expected fraction of the optimal welfare extracted might be lower than the guarantee (but for other values, it might be higher). 

We note that, for fixed-price mechanisms, it is known that the approximation ratios achievable by mechanisms that use only the distribution of one player are the same as those achievable by mechanisms that might use both distributions \cite{CW-STOC}. However, mechanisms that use the distribution of only one player and achieve the state-of-the-art approximation ratios must be randomized, as it is known that deterministic fixed-price mechanisms that use only the distribution of one player cannot guarantee more than half of the optimal social welfare \cite{BD14}. Thus, the following theorem demonstrates another aspect in which buyer-offering mechanisms beat fixed-price mechanisms:

\begin{theorem}
    For every distribution $\mathcal F_s$ of the seller, there is a reserve $r$, such that for every $v_b$, the fraction of the welfare that the buyer offering mechanism with reserve $r$ extracts (conditioned on the value of the buyer being $v_b$) is at least $0.71004$, regardless of the distribution $\mathcal F_b$.
\end{theorem}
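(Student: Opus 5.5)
The plan is to pick the reserve using only $\distribution$. We choose $\reserve$ so that the no-trade ratio just below the reserve equals a target value $\alpha^*$. We then show that both regimes, $v_b<\reserve$ and $v_b\geq\reserve$, guarantee at least $0.71004$.

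Define $A(v_b)=\frac{\mathbb E[v_s]}{\opt{\distribution}(v_b)}$, the ratio obtained by never trading, as in the proof of the Neighborhood Lemma. Since $\opt{\distribution}(v_b)=\mathbb E[v_s]+G(v_b)$ with $G(v_b)=\int_0^{v_b}\distribution$ continuous and non-decreasing, $A$ is continuous and non-increasing. We have $A(0)=1$, so $A(v_b)\geq A(\reserve)$ for every $v_b\le \reserve$.

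Let $\alpha^*$ be the unique $\alpha$ solving
\[
\alpha=\min_{1-\alpha\le x\le 1}\; 1+x\ln x-\frac{\ln x\,(1-x+x\ln x)}{\ln x+\frac{\alpha}{1-\alpha}} .
\]
The left side increases in $\alpha$. The right side, the bound of Corollary~\ref{cor:threshold-lemma-no-bound-on-v-b}, is non-increasing in $\alpha$: raising $\alpha$ weakens the constraint in the tradeoff lemma and enlarges the feasible $x$-range. So a unique crossing point exists, and numerically it is $\alpha^*\approx 0.71004$. We then split into cases.

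\textbf{Case 1: $\inf_{v} A(v)=\lim_{v\to 1}A(v)\ge\alpha^*$.} Set $\reserve=1$, so no trade ever occurs. Every $v_b\in[0,1]$ then gets ratio $A(v_b)\ge\alpha^*$. Trade at $v_b=1$ is irrelevant, or we can use the fact that the values lie in $[0,1]$.

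\textbf{Case 2: otherwise.} By continuity and the intermediate value theorem there is a $\reserve$ with $A(\reserve)=\alpha^*$. For $v_b<\reserve$ the mechanism never trades, since every offer is at least $\reserve>v_b$ and would give negative profit, or zero profit at ties, which we break toward no trade. The ratio is then $A(v_b)\ge A(\reserve)=\alpha^*$. For $v_b=\reserve$ the same bound holds whether or not trade occurs, because trading at price $\reserve$ only adds welfare. For $v_b>\reserve$ we apply the tradeoff lemma with $\appbelow=\alpha^*$, which is valid because the no-trade ratio at $\reserve$ is at most $\alpha^*$. Corollary~\ref{cor:threshold-lemma-no-bound-on-v-b} then lower-bounds the ratio by the right-hand side above, which equals $\alpha^*$ by the choice of $\alpha^*$.

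\textbf{Main obstacle.} The delicate part is numerical. We must certify that the one-variable minimum in Corollary~\ref{cor:threshold-lemma-no-bound-on-v-b} at $\appbelow=0.71004$ is at least $0.71004$. It suffices to fix $\appbelow=0.71004$ directly, with no fixed-point argument needed: take $\reserve$ with $A(\reserve)=0.71004$ and verify that $h(x)=1+x\ln x-\frac{\ln x(1-x+x\ln x)}{\ln x+\alpha/(1-\alpha)}\ge 0.71004$ on $[0.28996,1]$. I would do this by a fine grid evaluation combined with a Lipschitz bound on $h$. The denominator stays bounded away from $0$ on this interval, by the first part of Claim~\ref{claim:properties-of-opt-problem}, so the Lipschitz constant is explicit. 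One degenerate case also needs a check: $\mathbb E[v_s]=0$, where we take $\reserve=0$ and the unrestricted buyer-offering mechanism trades at price $0$ and is optimal.
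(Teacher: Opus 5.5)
Your proposal is correct and follows essentially the paper's route. You choose $r$ so that the no-trade ratio at $r$ equals the target $\alpha$, use monotonicity of $\opt{\distribution}(\cdot)$ for $v_b\le r$, apply Corollary~\ref{cor:threshold-lemma-no-bound-on-v-b} for $v_b>r$, and then certify numerically that the one-variable minimum at $\alpha=0.71004$ is at least $0.71004$. Your additional care does not change the argument but tightens it: the case $\mathbb E[v_s]\ge\alpha$, where no such $r$ exists and which the paper's continuity argument glosses over; the degenerate case $\mathbb E[v_s]=0$; and a grid-plus-Lipschitz certificate in place of ``numerical inspection'' of $h'$, which must resolve a margin of only about $7\cdot 10^{-6}$.
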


The proof is an application of the tradeoff lemma, by equalizing the approximation ratio guaranteed for values below and above the reserve.
\begin{proof}
Fix a distribution $\mathcal F_s$ over the values of the seller. We consider the buyer-offering mechanism with a reserve $r$, to be chosen as a function of $\distribution$ only.

For any buyer value $v_b< r$, the reserve prevents trade, and therefore the mechanism's welfare equals $\E_{v_s\sim\distribution}[v_s]$.  The optimal welfare conditioned on $v_b$ is
\[
\opt{\distribution}(v_b)= v_b\cdot \distribution(v_b) \;+\; \E_{v_s\sim\distribution}\big[v_s\cdot \indicator_{[v_s>v_b]}\big].
\]
In particular, $\opt{\distribution}(v_b)$ is non-decreasing in $v_b$, and hence for every $v_b\le r$,
\[
\frac{\E_{v_s\sim\distribution}[v_s]}{\opt{\distribution}(v_b)}
\;\;\ge\;\;
\frac{\E_{v_s\sim\distribution}[v_s]}{\opt{\distribution}(r)}
\]
Fix a target value $\alpha\in(0,1)$ and choose $r$ so that
\begin{equation}\label{eq:choose-r}
\frac{\E_{v_s\sim\distribution}[v_s]}{\opt{\distribution}(r)}=\alpha
\end{equation}
Existence of such $\alpha$ follows from continuity: at $v_b=0$ the ratio equals $1$, and as $v_b$ increases the denominator $\opt{\distribution}(v_b)$ increases, driving the ratio down. Then, by the inequality above, the welfare fraction guaranteed by the mechanism is at least $\alpha$ for all $v_b\le r$.

Now consider any buyer value $v_b>r$. By construction, \eqref{eq:choose-r} exactly matches the premise of Lemma~\ref{lemma:assymetric} with $\appbelow=\alpha$ and $\reserve=r$. Applying Corollary~\ref{cor:threshold-lemma-no-bound-on-v-b}, it follows that for every $v_b>r$, the welfare fraction guaranteed by the buyer-offering mechanism with reserve $r$ is at least
\begin{equation*}\label{eq:phi-def}
\phi(\alpha)\;=\;
\min_{1-\alpha \leq x\leq 1}
\left(
1+x\ln x
-
\frac{\ln x \left(1-x+x\ln x\right)}{\ln x +\frac{\alpha}{1-\alpha}}
\right).
\end{equation*}

Combining the two steps, for the reserve $r$ chosen according to \eqref{eq:choose-r}, the mechanism guarantees, for every buyer value $v_b$, a welfare fraction of at least $\min\{\alpha,\;\phi(\alpha)\}$.
We therefore choose $\alpha$ to maximize this expression, which occurs at $\alpha=\phi(\alpha)$.
We now show that $\alpha \ge 0.71004$. To see this, fix $\alpha=0.71004$ and consider the function
\[
\phi(0.71004)
=
\min_{1-0.71004 \leq x\leq 1}
\left(
1+x\ln x
-
\frac{\ln x \left(1-x+x\ln x\right)}{\ln x +\frac{0.71004}{1-0.71004}}
\right)
\]
Let
\[
h(x)\;=\;
1+x\ln x
-
\frac{\ln x \left(1-x+x\ln x\right)}{\ln x +\frac{0.71004}{1-0.71004}}
\]
On the interval $x\in[1-0.71004,1]=[0.28996,1]$, the function $h(x)$ is smooth. On the interval $x\in[1-\alpha,1]$, the function $h(x)$ is continuously differentiable. Its derivative is given by

\begin{equation}\label{equ:deriviative-h}
    h'(x) = 1+\ln{x} -\frac{(\ln^2{x}+c)}{\ln{x}+ c} + \frac{\left(\frac{\alpha}{1-\alpha}\right)^2-\frac{\alpha}{x(1-\alpha)}+\frac{\alpha}{1-\alpha}}{\left(\ln{x}+ c\right)^2}
\end{equation}

A direct numerical inspection shows that $h'(x)$ has a unique zero in the interval
$x\in[1-\alpha,1]$, occurring at $x^\star \approx 0.539006$.
Moreover, $h'(x)<0$ for $x<x^\star$ and $h'(x)>0$ for $x>x^\star$, implying that
$x^\star$ is the unique minimizer of $h(x)$ on this interval.
Evaluating the function at this point yields $h(x^\star)\approx 0.710047$.
Thus,
\[
\phi(0.71004)=\min_{x\in[0.28996,1]} h(x) = h(x^\star) \approx 0.710047 ,
\]
It follows that the buyer-offering mechanism with reserve $r$ guarantees a welfare fraction of at least $0.71004$ for every buyer value $v_b$.
\end{proof}

We also show that the approximation ratio guaranteed by our mechanism is very close to be optimal for this class of mechanisms.

\begin{theorem}
There exists a distribution of the seller $\mathcal F_s$ such that for every reserve $r$, there is a buyer value $v_b$ for which the fraction of the optimal social welfare guaranteed by the buyer-offering mechanism with reserve $r$ (conditioned on $v_b$) is at most $0.7159$.
\end{theorem}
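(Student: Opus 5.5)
The plan is to build a seller distribution $\distribution$ on which both cases forced by the reserve are tight at once. Below the reserve, a buyer with value just under $r$ never trades. Above it, a buyer with value $v_b$ faces an equal-profit tail and picks an offer that is too low.

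The natural family is the one the tradeoff lemma proves extremal. Claim~\ref{claim:bound-on-seller-val} is tight exactly when $\worsedis(v)=\min\{1,\frac{u}{v_b-v}\}$ on $(r,v_b]$, where $u=(v_b-r)\worsedis(r)$, so $v_b$ is indifferent among all offers. Since $\distribution$ cannot depend on $r$, I would take a scale-free candidate: an atom of mass $q$ at $0$ plus an absolutely continuous part whose CDF has an equal-revenue-like shape. A concrete first try is $\distribution(v)=q+(1-q)\,g(v)$ on $[0,1]$ with an atom at the top, with one or two free parameters to optimize numerically.

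The argument then proceeds in four steps.

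\begin{enumerate}
\item \textbf{Case $r\le r^\ast$.} Let $r^\ast$ be the reserve at which the no-trade ratio $A(r)=\E[v_s]/\opt{\distribution}(r)$ equals $0.7159$. For $r\le r^\ast$, I exhibit a buyer value $v_b>r$ (for instance $v_b=1$, or a value tuned to $r$) at which the buyer's profit-maximizing offer $p\ge r$ leaves the ratio $\bigl(v_b\distribution(p)+\E[v_s\indicator_{[p<v_s\le v_b]}]+\E[v_s\indicator_{[v_s>v_b]}]\bigr)/\opt{\distribution}(v_b)$ at most $0.7159$.
\item \textbf{Case $r>r^\ast$.} Here I take $v_b\uparrow r$. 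The ratio tends to $A(r)\le A(r^\ast)=0.7159$, because $\opt{\distribution}$ is nondecreasing, as used in the previous theorem.
\item \textbf{Reduction to one dimension.} Both ratios are explicit functions of $r$ (and of $v_b$), so the claim reduces to checking a one-dimensional inequality on $[0,r^\ast]$.
\item \textbf{Verification.} I would check that inequality analytically where possible, and otherwise by a monotonicity argument plus a fine numerical grid with Lipschitz bounds.
\end{enumerate}

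The main obstacle is step~1, because it requires a single distribution that hurts high buyers for every small reserve at once. If the equal-profit structure is tailored to one $v_b$, it fails for other reserves. To handle this I would make $\distribution$ self-similar, for example $\distribution(v)=\frac{c}{1-v}$ on a range. Then for every $r$ in the relevant range, the buyer with $v_b=1$ is exactly indifferent among all offers in $[r,1-c]$ and may be assumed to break ties toward $r$, reproducing the extremal case of Claim~\ref{claim:bound-on-seller-val}.

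If tie-breaking is an issue, I would perturb $\distribution$ slightly so that $r$ is strictly optimal, losing only $\epsilon$. I would then tune $c$ and the atom at $0$ to balance the two cases, and the gap between $0.7159$ and $0.71004$ would reflect that this family cannot make the tradeoff-lemma bound tight for all $x$ simultaneously.
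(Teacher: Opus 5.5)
Your final construction is essentially the paper's proof. The paper uses the equal-profit distribution $F_s(t)=\frac{c}{1-t}$ on $[0,1-c]$, whose atom at $0$ is automatically $F_s(0)=c$, so there is one parameter to tune, not a separate atom. It pits $v_b=1$ (indifferent among offers in $[r,1-c]$, ties broken toward $r$) against $v_b\uparrow r$, and your split at $r^\ast$ is equivalent to its crossing argument for the increasing $W_1(r)=1+c\ln\frac{c}{1-r}$ and the decreasing $A(r)=\frac{1-c+c\ln c}{1-c+c\ln\frac{c}{1-r}}$. The only step you leave open is the numerics, and they close in closed form. With $y=c\ln\frac{c}{1-r}$, the balance condition is $y^2+(2-c)y-c\ln c=0$, and the paper's choice $c=0.455$ gives crossing value $1+y\approx 0.71583\le 0.7159$, so no grid or Lipschitz argument is needed.
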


\begin{proof}
We construct a specific distribution $\mathcal F_s$ and analyze the performance of the buyer-offering mechanism with an arbitrary reserve $r$.

Let $x=0.455$ and consider the equal-profit distribution supported on $[0,1-x]$, defined by the cumulative distribution function
\[
F_s(t)=\frac{x}{1-t}\qquad\text{for }t\in[0,1-x].
\]
A direct calculation shows that the expected value of the seller under this distribution is
\[
\E[v_s]=\int_{0}^{1-x} (1-F_s(t))\,dt
=1-x+x\ln x
\approx 0.1867.
\]

For every reserve $r$, we consider two buyer values and show that at least one of them induces an approximation ratio of at most $0.7159$.

When the buyer's value is $1$, the buyer-offering mechanism with reserve $r$ makes the offer $r$. Trade occurs if and only if $v_s \le r$. In this case, the welfare is $1$; otherwise, the welfare equals $v_s$. Thus, the expected welfare of the mechanism is
\[
F_s(r)\cdot 1 + \int_{r}^{1-x} t \, dF_s(t).
\]
Since the optimal welfare is $1$, the welfare fraction guaranteed by the mechanism in this case equals this expression.

Fix $\varepsilon>0$ arbitrarily small. When $v_b=r-\varepsilon$, the reserve prevents trade, and the welfare obtained by the mechanism is simply $\E[v_s]$. The optimal welfare, conditioned on this buyer value, is
\[
\E[\max\{v_s,v_b\}]
=(r-\varepsilon)\cdot F_s(r-\varepsilon)
+\int_{r-\varepsilon}^{1-x} t\, dF_s(t).
\]
Letting $\varepsilon\to 0$, the welfare fraction in this case converges to
\[
\frac{\E[v_s]} {rF_s(r)+\int_{r}^{1-x} t\, dF_s(t)}.
\]

For every $r\in[0,1-x]$,
\[
F_s(r)+\int_{r}^{1-x} t\, dF_s(t)
=\frac{x}{1-r} + (1-x)-\frac{rx}{1-r}+x\ln\!\left(\frac{x}{1-r}\right),
\]
while
\[
rF_s(r)+\int_{r}^{1-x} t\, dF_s(t)
=1-x+x\ln\!\left(\frac{x}{1-r}\right).
\]
Define $W_1,W_2$ be the approximation ratios when the buyer's value is $1$ and when it approaches $r$, respectively, when the reserve is $r$:
\[
W_1(r)\;=\;F_s(r)+\int_{r}^{1-x} t\, dF_s(t)
\qquad\text{and}\qquad
W_2(r)\;=\;\frac{\E[v_s]}{rF_s(r)+\int_{r}^{1-x} t\, dF_s(t)},
\]
$W_1(r)$ is increasing and $W_2(r)$ is decreasing on $[0,1-x]$. Therefore, the function
\[
m(r)\;=\;\min\{W_1(r),W_2(r)\}
\]
is maximized at a point where $W_1(r)=W_2(r)$ (when such a point exists). Since $W_1(0)<W_2(0)$ and
$W_1(1-x)=1>W_2(1-x)$, continuity implies that there exists $r^\star\in(0,1-x)$ such that $W_1(r^\star)=W_2(r^\star)$.
At this point,
\[
m(r^\star)=W_1(r^\star)=W_2(r^\star)\approx 0.7158267.
\]
In particular, $m(r^\star)\le 0.7159$. Since $m(r)$ is maximized at $r^\star$, we have $m(r)\le 0.7159$ for every reserve
price $r$. In other words, for every reserve there is a value of the buyer for which the approximation ratio is at most $0.7159$.
\end{proof}

Note that this bound does not exclude the possibility that a distribution over buyer-offering mechanisms will beat this bound, even when the distribution of the buyer is not known.

%% file: both-sides.tex
\section{The Main Result: A $\upperbound$-Approximation Mechanism}\label{sec-main-result}

In this section, we prove that there exists a buyer-offering mechanism with a reserve $\reserve$ that provides a $\upperbound$-approximation ratio for any pair of agent value distributions.

\begin{theorem}\label{thm-main}
    {For every two distributions $\mathcal F_s, \mathcal F_b$ of the seller and the buyer, there is a reserve $r$, such that the expected welfare of the buyer-offering mechanism with reserve $r$ is at least a fraction of $\upperbound$ of the welfare of the optimal social welfare.}
\end{theorem}

We consider a finite family of buyer-offering mechanisms with different reserve prices and argue that at least one of them must perform well against any pair of buyer and seller value distributions. This is the high-level idea of the argument. We start with a decreasing sequence  
\[
\sigma = {\appbelow}_1 \ge \dots \ge {\appbelow}_n
\]
of $n$ values in $[0,1]$, where for each ${\appbelow}_i$ there exists a buyer value in the support such that the approximation ratio of the mechanism that never trades equals ${\appbelow}_i$. 

Given a sequence $\sigma$, we partition the buyer’s value support into $n+1$ intervals $I_1, \dots, I_{n+1}$ according to the values in $\sigma$, such that each value in the $j$th interval has an approximation ratio of at least ${\appbelow}_j$ and at most ${\appbelow}_{j-1}$ when the item is never traded.  
The same sequence $\sigma$ also determines the reserve prices we consider, $r_1, \dots, r_n$: each reserve corresponds to a buyer value at which the approximation ratio of the no-trade mechanism equals ${\appbelow}_i$.

We then use the bounds from the Tradeoff Lemma and the Neighborhood Lemma to derive worst-case lower bounds on the approximation ratio of the buyer-offering mechanism with each reserve for buyer values in each interval. These bounds define the entries of a coefficient matrix. That is, we define a matrix $A$, where each row $i$ corresponds to a buyer-offering mechanism with reserve $r_i$, and each entry $A_{i,j}$ provides a lower bound on the approximation ratio when the buyer’s value falls within interval $I_j$.

Each buyer distribution induces weights over the intervals, with the overall approximation ratio given by a weighted sum of the interval-wise guarantees. The worst-case choice of weights is captured by a linear program, whose optimal value lower-bounds the approximation ratio of the best reserve price for any buyer distribution.

Finally, we construct a specific sequence $\sigma$ and show that the value of the resulting linear program is at least $\upperbound$. This is done by constructing a feasible dual solution, which certifies the $\upperbound$ approximation guarantee.

\begin{proposition}\label{prop:lp-bound-given-sigma}
Let $\sigma = ({\appbelow}_1,\dots,{\appbelow}_n)$ be a decreasing sequence of values in $[0,1]$. Then the optimal value of the linear program~\eqref{program:lp-primal}, defined using the coefficient matrix $A_\sigma$ constructed as in~\eqref{def:set-coeff-matrix}, provides a lower bound on the approximation ratio of the best buyer-offering mechanism with reserve.
\end{proposition}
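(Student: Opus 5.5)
Given $\distribution$, I first fix the reserves. Let $A(v_b)=\E[v_s]/\opt{\distribution}(v_b)$ denote the no-trade ratio. It is continuous and non-increasing in $v_b$, equals $1$ at $v_b=0$, and is bounded below by $A(1)$. For each $i$ I choose $r_i$ with $A(r_i)=\appbelow_i$. If some $\appbelow_i$ lies outside the range of $A$, the reserve $r_i$ is degenerate and I handle it by treating the corresponding intervals as empty. I then set $I_1=[0,r_1)$, $I_j=[r_{j-1},r_j)$ and $I_{n+1}=[r_n,1]$. Since the $\appbelow_i$ decrease, we have $r_1\le\dots\le r_n$, and every $v_b\in I_j$ satisfies $\appbelow_j\le A(v_b)\le\appbelow_{j-1}$.

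Next I justify each coefficient $A_{i,j}$ as a lower bound on $\bor{r_i}{\distribution}(v_b)/\opt{\distribution}(v_b)$ for every $v_b\in I_j$. There are two cases.
\begin{itemize}
\item If $j\le i$, then $v_b<r_i$, so the mechanism never trades. The ratio is $A(v_b)\ge\appbelow_j$, and this justifies $A_{i,j}=\appbelow_j$.
\item If $j>i$, then $v_b\ge r_{j-1}\ge r_i$. I apply Lemma~\ref{lemma:assymetric} with $\appbelow=\appbelow_i$ and $\reserve=r_i$, using the bound $\lambda_{v_b}$ on $r_i/v_b$. The Neighborhood Lemma (Lemma~\ref{lemma:distance}), applied with $r_1\to r_i$ and $r_2\to v_b$, gives $r_i/v_b\ge\frac{\appbelow_j(1-\appbelow_i)}{\appbelow_i(1-\appbelow_j)}$, because $A(v_b)\ge\appbelow_j$. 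Monotonicity of the tradeoff bound in $\lambda$ then lets me use the uniform value $\lambda_{i,j}=\frac{\appbelow_j(1-\appbelow_i)}{\appbelow_i(1-\appbelow_j)}$ across the interval. This yields $A_{i,j}$ as the value of the optimization problem in Lemma~\ref{lemma:assymetric}, or the simpler Corollary~\ref{cor:threshold-lemma-bound-on-v-b}, with parameters $(\appbelow_i,\lambda_{i,j})$.
\end{itemize}
The edge cases need care. At $v_b=r_i$ itself the buyer must offer at least $r_i$, which is at least as good as the limit. When $A(v_b)=1$ the ratio is trivially $1$.

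The main obstacle is aggregating over the buyer distribution, because welfare is a ratio of expectations and not an expectation of ratios. I handle this by weighting by optimal welfare. Let $w_j=\E_{v_b}[\opt{\distribution}(v_b)\indicator_{[v_b\in I_j]}]/\opt{\distribution,\distbuyer}$. These weights are non-negative and sum to $1$. Then
\[
\frac{\bor{r_i}{\distribution,\distbuyer}}{\opt{\distribution,\distbuyer}}=\sum_j \E\!\left[\frac{\bor{r_i}{\distribution}(v_b)}{\opt{\distribution}(v_b)}\cdot\frac{\opt{\distribution}(v_b)\indicator_{[v_b\in I_j]}}{\opt{\distribution,\distbuyer}}\right]\ge\sum_j A_{i,j}w_j .
\]
It follows that the best reserve attains at least $\max_i\sum_j A_{i,j}w_j$ for some point $w$ of the simplex. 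This is at least $\min_{w\in\Delta}\max_i (A_\sigma w)_i$, which is exactly the program~\eqref{program:lp-primal}: minimize $t$ subject to $t\ge(A_\sigma w)_i$ for all $i$, $\sum_j w_j=1$, and $w\ge0$. This proves the proposition for every pair $\distribution,\distbuyer$.
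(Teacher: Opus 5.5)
Your proposal is correct and follows the paper's proof essentially step for step. You choose reserves $r_i$ with $A(r_i)={\appbelow}_i$ by continuity of the no-trade ratio, justify $A_\sigma$ entrywise (the no-trade bound for $j\le i$, the Neighborhood Lemma plus the Tradeoff Lemma for $j>i$), and use the OPT-weighted interval masses $w_j$ to turn the ratio of expectations into the min--max program~\eqref{program:lp-primal}.

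The only deviations are bookkeeping:
\begin{itemize}
\item Your intervals $[r_{j-1},r_j)$ put $v_b=r_i$ into column $i+1$. Your edge-case remark therefore relies on the indifferent buyer at $v_b=r_i$ offering $r_i$ rather than abstaining. The paper's right-closed intervals $(r_{j-1},r_j]$ avoid this, since $j>i$ then forces $v_b>r_i$.
\item For out-of-range ${\appbelow}_i$, the paper uses Remark~\ref{remark:exact-seq} (reserve $1$ and an entrywise-dominating matrix). In your version, the degenerate rows must likewise still be realized by an actual mechanism, such as reserve $1$, since dropping rows could only lower the LP value.
\item The entries $\max\{\alpha_{i,j},{\appabove}_i\}$ and $A_{i,n+1}={\appabove}_i$ are covered because both corollaries relax the Tradeoff Lemma's program, Corollary~\ref{cor:threshold-lemma-no-bound-on-v-b} being the case $\lambda_{v_b}=0$. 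You should say this explicitly.
\end{itemize}
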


The proof of the proposition consists of several steps and is provided next. After proving the proposition, we construct a specific sequence $\sigma$ (with $n=19$), and show\footnote{Of course, it is possible to simply put the coefficient matrix in an LP solver and get a solution. However, for completeness, we provide a proof of this fact by considering the dual.} that the value of the linear program with these specific variables is at least $\upperbound$ (Appendix~\ref{sec:exact-lp-values}). Thus, this also establishes Theorem \ref{thm-main}. Of course, we expect that adding more values to the sequence will result in an improved bound.

\subsection*{Proof of Proposition \ref{prop:lp-bound-given-sigma}}

\subsubsection*{Step 1: A Sequence $\sigma$}\label{sec:sequence-from-seller-dis}
Fix a distribution $\distribution$ of the seller. Define $\appbelow(v_b)$ to be the approximation ratio when the buyer's value is $v_b$ and the mechanism never trades: 
$$\appbelow(v_b) = \frac{\mathbb E_{v_s \sim \distribution}[v_s]}{v_b \cdot \distribution(v_b) + \mathbb E_{v_s \sim \distribution}[v_s\cdot\indicator_{[v_s >v_b]}]}
$$
Note that $\appbelow(0)=1$ and that $\appbelow(v_b)$ is monotone non-increasing. Thus, its minimal value is attained at the highest value in the support, that is, $v_b=1$. Let ${\appbelow}_{\min}$ denote this value.
Observe that the function $\appbelow(\cdot)$ is continuous even if  $\distribution$ is discrete. To see this, consider a discrete seller distribution $\distdis$, and let $\suppseller = \{ s_1 < \dots < s_n\} $ be its support \footnote{Note that if only part of the distribution is discrete, then there will be a value $s_{i}$ with positive probability, and the distribution will become continuous on some interval $(s_{i-1}, s_{i})$. In this case, we have $\mathcal{F}(s_i) = \lim_{s \to s_i^{-}}\mathcal{F}(s) + \Pr[v_s = s_i]$, and the function $\appbelow(v_b)$ for $v_b \in (s_{i-1}, s_i)$ becomes
\[
\appbelow(v_b) = \frac{\mathbb{E}_{v_s \sim \mathcal{F}}[v_s]}{v_b \cdot \lim_{s\to v_b^-}\mathcal{F}(v_b) + s_i \cdot\Pr[v_s =s_i] +  \lim_{s \to s_i^-} \int_{v_b}^{s} v_s \cdot\distributionpdf(v_s) \d v_s  +\mathbb{E}_{v_s \sim \mathcal{F}}[v_s \cdot \mathbf{1}_{[v_s > s_i]}]},
\]
Note that when $v_b \to s_i^-$ we get: 
$$
\lim _{v_b \to s_i^-} \appbelow(v_b) =  \frac{\mathbb{E}_{v_s \sim \mathcal{F}}[v_s]}{v_b \cdot \lim_{s\to v_b^-}\mathcal{F}(v_b) + s_i \cdot\Pr[v_s =s_i]  +\mathbb{E}_{v_s \sim \mathcal{F}}[v_s \cdot \mathbf{1}_{[v_s > s_i]}]}  =\frac{\mathbb{E}_{v_s \sim \mathcal{F}}[v_s]}{s_i \cdot \mathcal{F}(s_i) +\mathbb{E}_{v_s \sim \mathcal{F}}[v_s \cdot \mathbf{1}_{[v_s > s_i]}]} =\appbelow(s_i),
$$
and thus the function is still continuous.
}.
Fix an index $i \in \{2, \dots, n\}$ and consider buyer values $v_b \in [s_{i-1}, s_i]$. For such values, the function $\appbelow(\cdot)$ can be written as:
$$\appbelow(v_b) = 
    \frac{\mathbb E_{v_s \sim \distribution}[v_s]}{v_b \cdot \distribution(s_{i-1}) + s_i\cdot(\distribution(s_i)-\distribution(s_{i-1})) + \mathbb E_{v_s \sim \distribution}[v_s\cdot\indicator_{[v_s >s_i]}]} 
 $$
All terms in the denominator except for $v_b$ are constant on the interval $[s_{i-1}, s_i]$, and the dependence on $v_b$ is linear. Consequently, $\appbelow(v_b)$ is continuous on each interval $[s_{i-1}, s_i]$. Since these intervals cover the entire support of buyer values, the function $\appbelow(\cdot)$ is continuous everywhere.

By the intermediate value theorem, for every $\alpha \in [{\appbelow}_{\min}, 1]$, there is $v_b \in \suppbuyer$ such that $\appbelow(v_b) = \alpha$.  
The case that ${\appbelow}_{\min} > 0$ is easier and will be discussed in Remark~\ref{remark:exact-seq}. For now, we assume that ${\appbelow}_{\min}=0$.
We may therefore choose any $n$ decreasing values in $[0,1]$ and denote them by
\[
\sigma = ({\appbelow}_1 > {\appbelow}_2 > \cdots > {\appbelow}_n).
\]
For each ${\appbelow}_i$, let $r_i$ be a buyer value for which $\appbelow(r_i)={\appbelow}_i$. Since $\appbelow(\cdot)$ is monotone, we have
\[
0 \leq r_1 < r_2 < \cdots < r_n,
\]
and these values induce a partition of the buyer’s value support into $n+1$ intervals
\[
I_1= [0,r_1],\; I_2=(r_1,r_2],\; \ldots,\; I_n=(r_{n-1},r_n],\; I_{n+1}=(r_n,1].
\]

By construction, for every buyer value $v_b \in I_j$, the approximation ratio of the mechanism that never trades satisfies
\[
{\appbelow}_j \le {\appbelow}(v_b) \le {\appbelow}_{j-1}.
\]


Throughout the proof, we may refer to the seller and buyer distributions, but the analysis does not require their exact specification. The actual reserve values are irrelevant; it suffices that they induce a partition of the buyer support into $n+1$ intervals and yield interval-wise approximation guarantees.

\subsubsection*{Step 2: The Coefficient Matrix}\label{sec:coeff-matrix}
Recall that for every interval $I_j$ we have a lower bound on the approximation ratio when the item is never traded and the buyer’s value lies in $I_j$. We now derive analogous bounds for buyer-offering mechanisms with reserves drawn from
\[
R := \{r_1,\dots,r_n\}.
\]

Fix a reserve $r_i$. Using the Tradeoff Lemma, we obtain a lower bound on the approximation ratio of the buyer-offering mechanism with reserve $r_i$ for all buyer values $v_b \in I_j$ with $j>i$. Intuitively, these are precisely the buyer values for which trade may occur under reserve $r_i$.

We associate each ${\appbelow}_i \in \sigma$ with a corresponding value ${\appabove}_i$, defined as the worst-case approximation ratio guaranteed by the buyer-offering mechanism with reserve $r_i$ over all buyer values in intervals $I_j$ with $j>i$. The value ${\appabove}_i$ is given by the solution to the optimization problem specified in Corollary~\ref{cor:threshold-lemma-no-bound-on-v-b}.

Combining these bounds across all reserves and intervals defines the coefficient matrix $A$, where each row $i$ corresponds to reserve $r_i$, and each entry $A_{i,j}$ gives a lower bound on the approximation ratio when the buyer’s value lies in interval $I_j$. 

We further exploit the structure of the reserve prices to obtain tighter bounds when the reserve is $r_i$ and the buyer’s value lies in an interval $I_j$ with $j>i$ but close to $i$. These refined bounds follow from the Neighborhood Lemma, which captures the improvement in the approximation ratio when the reserve and buyer value are nearby. This refinement is incorporated into the definition of the coefficient matrix in~\eqref{def:set-coeff-matrix}.

Formally, we construct a matrix $A \in \mathbb{R}_{\geq 0}^{n \times (n+1)}$, where the $i$th row corresponds to the buyer-offering mechanism with reserve $r_i$, and the $j$th column corresponds to buyer values lying in the interval $I_j$. For every $i \in [n]$ and $j \in [n+1]$, the approximation ratio of the buyer-offering mechanism with reserve $r_i$ is at least $A_{i,j}$ whenever the buyer’s value $v_b$ lies in $I_j$.
We construct the entries of the matrix $A$ as follows.
For indices $i<j\leq n$, define
\begin{equation*}
    \lambda_{i,j}
    \;=\;
    \frac{{\appbelow}_j(1-{\appbelow}_i)}
         {{\appbelow}_i(1-{\appbelow}_j)} .
\end{equation*}
For $i<j \leq n$, let $\alpha_{i,j}$ denote the value of $\appabove(\appabove, \lambda_{v_b})$ in
Corollary~\ref{cor:threshold-lemma-bound-on-v-b} with parameters
$\lambda_{v_b}=\lambda_{i,j}$ and $\appbelow={\appbelow}_i$.
We then define
\begin{equation}\label{def:set-coeff-matrix}
    A_{i,j}
    =
    \begin{cases}
        \max\{\alpha_{i,j},\,{\appabove}_i\}, & i<j \leq n,\\
        {\appbelow}_j, & j\le i,\\
        {\appabove}_i & j=n+1.
    \end{cases}
\end{equation}


By Lemma~\ref{lemma:distance}, and Corollary~\ref{cor:threshold-lemma-bound-on-v-b}, the choice of values for the matrix $A$ in Equation~\ref{def:set-coeff-matrix} indeed provides the required lower bounds on the approximation ratio of each buyer-offering mechanism on each interval $I_j$.


\subsubsection*{Step 3: Variables and the Linear Program}
Consider an arbitrary distribution $\distbuyer$ over the buyer's value. 
We define weights $\vec{w} = (w_1, \dots, w_{n+1})$ over the intervals $I_1, \dots, I_{n+1}$ such that each weight represents the contribution of the values in the corresponding interval to the optimal welfare. That is, for each interval $I_j$, let:
\begin{equation*}
    w_j  = \frac{\int_{v_b \in I_j} \distbuyerpdf(v_b)\cdot \opt{\distribution, \distbuyer}(v_b) \, dv_b}{\opt{\distribution, \distbuyer}}
\end{equation*}


Then, for every reserve $r_i$, the quantity $(A \cdot \vec{w}^\top)_i$ provides a lower bound on the approximation ratio of the buyer-offering mechanism with reserve $r_i$ under the distribution $\distbuyer$. To see this, suppose we have a lower bound $\alpha_j$ on the approximation ratio of a mechanism $\mechanism$ whenever the buyer's value falls in interval $I_j$. Then the overall approximation ratio of $\mechanism$ under the buyer distribution $\distbuyer$ satisfies
\begin{align*}
    \frac{\mechanism_{\distribution, \distbuyer}}{\opt{\distribution, \distbuyer}}
    & = \frac{\int_{v_b \in \suppbuyer} \mechanism_{\distribution, \distbuyer}(v_b)\cdot \distbuyerpdf(v_b) \, dv_b}{\opt{\distribution, \distbuyer}} \\
    & = \sum_{j \in [n+1]} \frac{\int_{v_b \in I_j} \mechanism_{\distribution, \distbuyer}(v_b)\cdot \distbuyerpdf(v_b) \, dv_b}{\opt{\distribution, \distbuyer}} \\
    & \ge \sum_{j \in [n+1]} \alpha_j \cdot w_j,
\end{align*}
where the inequality follows from the interval-wise lower bounds $\alpha_j$.

Observe that the weights satisfy $w_j \in [0,1]$ and $\sum_{j \in [n+1]} w_j = 1$. Therefore, by treating the $w_j$ as variables over the simplex, we can compute the worst-case choice of weights. The resulting linear program provides a lower bound on the approximation ratio of the buyer-offering mechanism for any buyer and seller distributions.

\begin{equation}\label{program:lp-primal}
\begin{aligned}
    & \text{minimize } && t \\
    & \text{subject to } && A \cdot \Vec{w}^\top \le t \cdot \Vec{1}_n, \\
    & && w_j \ge 0 \quad \text{for all } j \in [n+1], \\
    & && \sum_{j \in [n+1]} w_j = 1.
\end{aligned}
\end{equation}

\begin{remark}\label{remark:exact-seq}
    Note that when ${\appbelow}_{\min} > 0$, then taking $\reserve = 1$ ensures that the approximation ratio of the buyer-offering mechanism with reserve $\reserve$ is at least ${\appbelow}_{\min}$ when $v_b \leq \reserve$, and is equal to $1$ when $v_b > \reserve$ (since the buyer is forced to choose an offer at least as large as the seller's value). 
   In this case, if ${\appbelow}_{\min} > {\appbelow}_n$, then for all indices $j$ with ${\appbelow}_j < {\appbelow}_{\min}$ we may replace the pair $({\appbelow}_j, {\appabove}_j)$ by $({\appbelow}_{\min}, 1)$. Since
\[
({\appbelow}_j, {\appabove}_j) \leq ({\appbelow}_{\min}, 1)
\quad\text{for all such } j,
\]
this replacement results in a coefficient matrix $A'$ that dominates $A$ entry-wise, i.e., $A'_{i,j} \geq A_{i,j}$ for all $i,j$. Note that this construction may result in repeated intervals, or intervals with a weight of $0$; however, this can only improve the resulting approximation guarantee.
\end{remark}

This completes the proof of Proposition~\ref{prop:lp-bound-given-sigma}.

\subsection*{Computing a Lower Bound on the Value of the Linear Program}

\subsubsection*{The Dual Problem}

To prove a lower bound on our linear program, we use weak duality. For this purpose, we define the dual problem~\ref{program:lp-dual}, and refer to the linear program in~\ref{program:lp-primal} as the primal problem.

\begin{equation}\label{program:lp-dual}
\begin{aligned}
    & \text{maximize } \lambda \\
    & \text{subject to } & \quad  A^t\Vec{y} \geq  \lambda \cdot \Vec{1} & \\
    & & y_i \geq 0 \quad & \text{ for every } i \in [n]\\
    & & \sum_{i \in [n]} y_i =1
\end{aligned}
\end{equation}

Let $t, \Vec{w}$ be a feasible solution to the primal linear program, and let $\lambda, \Vec{y}$ be a feasible solution to its dual. By weak duality, we have $ \lambda \leq t$. Therefore, to prove that the optimal value of the primal problem is at least some value $b$, it suffices to construct a feasible dual solution $\lambda, \Vec{y}$ such that $\lambda \geq b$.  

The exact values used to obtain the approximation ratio of $\upperbound$ are presented in Appendix~\ref{sec:exact-lp-values}.

%% file: lp-instance.tex
\section{Constructing the Coefficient Matrix}\label{sec:exact-lp-values}

In this appendix, we give the numerical values used to prove that the value of
the linear program in Section~5 is at least \(0.746\). We stress that the displayed numerical values are the values used in the construction of the
linear program and in the dual solution below, rather than shorthand for hidden
higher-precision values. Each numerical entry of the coefficient matrix is a
valid lower bound on the corresponding approximation guarantee, and the
feasibility of the dual solution is checked using the displayed values.

We now construct the specific coefficient matrix that we use to prove our main theorem.  Set $n=19$, and let 
\begin{equation}\label{def:alpha-left-vals}
    {\vec{\appbelow}}^T = \begin{pmatrix}
0.782 \\
0.746 \\
0.728 \\
0.710 \\
0.692 \\
0.674 \\
0.657 \\
0.640 \\
0.622 \\
0.605 \\
0.589 \\
0.572 \\
0.539 \\
0.507 \\
0.476 \\
0.446 \\
0.416 \\
0.345 \\
0.280 \\
\end{pmatrix}
\end{equation}

As described in Section~\ref{sec:coeff-matrix} the values ${\appabove}_1, \dots, {\appabove}_n$ are computed as the solutions to optimization problem in Corollary~\ref{cor:threshold-lemma-no-bound-on-v-b}. We explicitly show the computation of the one value, the remaining values follow similarly. 

The approximation ratio of the buyer-offering mechanism with reserve $\reserve $ , chosen so that the approximation ratio of the no-trade mechanism at buyer value $r$ is at most ${\appbelow}_1 = 0.746$, is at least
\begin{equation*}
\phi(0.746)\;=\;
\min_{0.254 \leq x\leq 1}
\left(
1+x\ln x
-
\frac{\ln x \left(1-x+x\ln x\right)}{\ln x +\frac{0.746}{0.254}}
\right).
\end{equation*}

We now show that $\phi(0.746) \ge 0.7$. Similar to Section~\ref{sec-seller-only}, let
\[
h(x)\;=\;
1+x\ln x
-
\frac{\ln x \left(1-x+x\ln x\right)}{\ln x +\frac{0.746}{0.254}}
\]
On the interval $x\in[0.254,1]$, the function $h(x)$ is smooth and continuously differentiable. Its derivative is given by Equation~\ref{equ:deriviative-h} with $\frac{\alpha}{1-\alpha} = \frac{0.746}{0.254}$.

A direct numerical inspection shows that $h'(x)$ has a unique zero in the interval
$x\in[0.254,1]$, occurring at $x^\star \approx 0.518$.
Moreover, $h'(x)<0$ for $x<x^\star$ and $h'(x)>0$ for $x>x^\star$, implying that
$x^\star$ is the unique minimizer of $h(x)$ on this interval.
Evaluating the function at this point yields $h(x^\star)\approx 0.70004$.
Thus, ${\appabove}_1 \geq 0.7$. Similarly, we obtain the vector $\vec{\appabove}$, whose exact values are in Figure~\ref{fig:vec-app-above}:

\begin{figure}[!htbp]
    \centering
    \[
{\vec{\appabove}}^T = \begin{pmatrix}
0.690 \\
0.700 \\
0.705 \\
0.710 \\
0.715 \\
0.720 \\
0.725 \\
0.730 \\
0.735 \\
0.740 \\
0.745 \\
0.750 \\
0.760 \\
0.770 \\
0.780 \\
0.790 \\
0.800 \\
0.825 \\
0.850 \\
\end{pmatrix}
\]    \caption{The vector $\vec{\appabove}$}
    \label{fig:vec-app-above}
\end{figure}

Then, we compute the values $\lambda_{i,j}, \alpha_{i,j}$ for $i< j \leq n$ and the coefficient matrix $A$ as is described in Section~\ref{sec:coeff-matrix}, see figure~\ref{fig:matrices} for exact values.  


\begin{figure}[p]
    \centering
\noindent\resizebox{\textwidth}{!}{$\lambda = \begin{pmatrix}
- & 0.818 & 0.746 & 0.682 & 0.626 & 0.576 & 0.533 & 0.495 & 0.458 & 0.426 & 0.399 & 0.372 & 0.325 & 0.286 & 0.253 & 0.224 & 0.198 & 0.146 & 0.108 \\
- & - & 0.911 & 0.833 & 0.764 & 0.703 & 0.652 & 0.605 & 0.560 & 0.521 & 0.487 & 0.455 & 0.398 & 0.350 & 0.309 & 0.274 & 0.242 & 0.179 & 0.132 \\
- & - & - & 0.914 & 0.839 & 0.772 & 0.715 & 0.664 & 0.614 & 0.572 & 0.535 & 0.499 & 0.436 & 0.384 & 0.339 & 0.300 & 0.266 & 0.196 & 0.145 \\
- & - & - & - & 0.917 & 0.844 & 0.782 & 0.726 & 0.672 & 0.625 & 0.585 & 0.545 & 0.477 & 0.420 & 0.371 & 0.328 & 0.290 & 0.215 & 0.158 \\
- & - & - & - & - & 0.920 & 0.852 & 0.791 & 0.732 & 0.681 & 0.637 & 0.594 & 0.520 & 0.457 & 0.404 & 0.358 & 0.317 & 0.234 & 0.173 \\
- & - & - & - & - & - & 0.926 & 0.859 & 0.795 & 0.740 & 0.693 & 0.646 & 0.565 & 0.497 & 0.439 & 0.389 & 0.344 & 0.254 & 0.188 \\
- & - & - & - & - & - & - & 0.928 & 0.859 & 0.799 & 0.748 & 0.697 & 0.610 & 0.536 & 0.474 & 0.420 & 0.371 & 0.274 & 0.203 \\
- & - & - & - & - & - & - & - & 0.925 & 0.861 & 0.806 & 0.751 & 0.657 & 0.578 & 0.510 & 0.452 & 0.400 & 0.296 & 0.218 \\
- & - & - & - & - & - & - & - & - & 0.930 & 0.870 & 0.812 & 0.710 & 0.624 & 0.552 & 0.489 & 0.432 & 0.320 & 0.236 \\
- & - & - & - & - & - & - & - & - & - & 0.935 & 0.872 & 0.763 & 0.671 & 0.593 & 0.525 & 0.465 & 0.343 & 0.253 \\
- & - & - & - & - & - & - & - & - & - & - & 0.932 & 0.815 & 0.717 & 0.633 & 0.561 & 0.497 & 0.367 & 0.271 \\
- & - & - & - & - & - & - & - & - & - & - & - & 0.874 & 0.769 & 0.679 & 0.602 & 0.533 & 0.394 & 0.290 \\
- & - & - & - & - & - & - & - & - & - & - & - & - & 0.879 & 0.776 & 0.688 & 0.609 & 0.450 & 0.332 \\
- & - & - & - & - & - & - & - & - & - & - & - & - & - & 0.883 & 0.782 & 0.692 & 0.512 & 0.378 \\
- & - & - & - & - & - & - & - & - & - & - & - & - & - & - & 0.886 & 0.784 & 0.579 & 0.428 \\
- & - & - & - & - & - & - & - & - & - & - & - & - & - & - & - & 0.884 & 0.654 & 0.483 \\
- & - & - & - & - & - & - & - & - & - & - & - & - & - & - & - & - & 0.739 & 0.545 \\
- & - & - & - & - & - & - & - & - & - & - & - & - & - & - & - & - & - & 0.738 \\
- & - & - & - & - & - & - & - & - & - & - & - & - & - & - & - & - & - & - \\
\end{pmatrix}$}
\vspace{1em}

\resizebox{\textwidth}{!}{$\alpha =
\begin{pmatrix}
- & 0.933 & 0.906 & 0.883 & 0.862 & 0.844 & 0.828 & 0.814 & 0.800 & 0.789 & 0.779 & 0.769 & 0.752 & 0.737 & 0.725 & 0.714 & 0.705 & 0.686 & 0.672 \\
- & - & 0.967 & 0.938 & 0.913 & 0.891 & 0.872 & 0.854 & 0.838 & 0.823 & 0.811 & 0.799 & 0.778 & 0.760 & 0.745 & 0.732 & 0.721 & 0.698 & 0.680 \\
- & - & - & 0.968 & 0.940 & 0.916 & 0.895 & 0.876 & 0.858 & 0.842 & 0.829 & 0.815 & 0.792 & 0.773 & 0.756 & 0.742 & 0.730 & 0.704 & 0.685 \\
- & - & - & - & 0.969 & 0.942 & 0.919 & 0.899 & 0.879 & 0.862 & 0.847 & 0.832 & 0.807 & 0.786 & 0.768 & 0.753 & 0.739 & 0.711 & 0.690 \\
- & - & - & - & - & 0.970 & 0.945 & 0.923 & 0.901 & 0.882 & 0.866 & 0.850 & 0.823 & 0.800 & 0.780 & 0.763 & 0.748 & 0.718 & 0.695 \\
- & - & - & - & - & - & 0.972 & 0.948 & 0.924 & 0.904 & 0.887 & 0.869 & 0.840 & 0.815 & 0.793 & 0.775 & 0.758 & 0.725 & 0.701 \\
- & - & - & - & - & - & - & 0.973 & 0.948 & 0.926 & 0.907 & 0.888 & 0.856 & 0.829 & 0.806 & 0.786 & 0.768 & 0.733 & 0.706 \\
- & - & - & - & - & - & - & - & 0.972 & 0.949 & 0.928 & 0.908 & 0.874 & 0.844 & 0.820 & 0.798 & 0.779 & 0.741 & 0.712 \\
- & - & - & - & - & - & - & - & - & 0.974 & 0.952 & 0.930 & 0.893 & 0.862 & 0.835 & 0.812 & 0.791 & 0.749 & 0.719 \\
- & - & - & - & - & - & - & - & - & - & 0.976 & 0.953 & 0.913 & 0.879 & 0.850 & 0.825 & 0.803 & 0.759 & 0.726 \\
- & - & - & - & - & - & - & - & - & - & - & 0.975 & 0.932 & 0.896 & 0.866 & 0.839 & 0.816 & 0.768 & 0.733 \\
- & - & - & - & - & - & - & - & - & - & - & - & 0.954 & 0.916 & 0.883 & 0.855 & 0.830 & 0.779 & 0.742 \\
- & - & - & - & - & - & - & - & - & - & - & - & - & 0.957 & 0.920 & 0.888 & 0.860 & 0.803 & 0.761 \\
- & - & - & - & - & - & - & - & - & - & - & - & - & - & 0.959 & 0.924 & 0.892 & 0.829 & 0.783 \\
- & - & - & - & - & - & - & - & - & - & - & - & - & - & - & 0.961 & 0.926 & 0.857 & 0.806 \\
- & - & - & - & - & - & - & - & - & - & - & - & - & - & - & - & 0.962 & 0.886 & 0.830 \\
- & - & - & - & - & - & - & - & - & - & - & - & - & - & - & - & - & 0.918 & 0.857 \\
- & - & - & - & - & - & - & - & - & - & - & - & - & - & - & - & - & - & 0.927 \\
- & - & - & - & - & - & - & - & - & - & - & - & - & - & - & - & - & - & - \\
\end{pmatrix}$}
\vspace{1em}

\resizebox{\textwidth}{!}{$ A =
\begin{pmatrix}
0.782 & 0.933 & 0.906 & 0.883 & 0.862 & 0.844 & 0.828 & 0.814 & 0.800 & 0.789 & 0.779 & 0.769 & 0.752 & 0.737 & 0.725 & 0.714 & 0.705 & 0.690 & 0.690 & 0.690 \\
0.782 & 0.746 & 0.967 & 0.938 & 0.913 & 0.891 & 0.872 & 0.854 & 0.838 & 0.823 & 0.811 & 0.799 & 0.778 & 0.760 & 0.745 & 0.732 & 0.721 & 0.700 & 0.700 & 0.700 \\
0.782 & 0.746 & 0.728 & 0.968 & 0.940 & 0.916 & 0.895 & 0.876 & 0.858 & 0.842 & 0.829 & 0.815 & 0.792 & 0.773 & 0.756 & 0.742 & 0.730 & 0.705 & 0.705 & 0.705 \\
0.782 & 0.746 & 0.728 & 0.710 & 0.969 & 0.942 & 0.919 & 0.899 & 0.879 & 0.862 & 0.847 & 0.832 & 0.807 & 0.786 & 0.768 & 0.753 & 0.739 & 0.711 & 0.710 & 0.710 \\
0.782 & 0.746 & 0.728 & 0.710 & 0.692 & 0.970 & 0.945 & 0.923 & 0.901 & 0.882 & 0.866 & 0.850 & 0.823 & 0.800 & 0.780 & 0.763 & 0.748 & 0.718 & 0.715 & 0.715 \\
0.782 & 0.746 & 0.728 & 0.710 & 0.692 & 0.674 & 0.972 & 0.948 & 0.924 & 0.904 & 0.887 & 0.869 & 0.840 & 0.815 & 0.793 & 0.775 & 0.758 & 0.725 & 0.720 & 0.720 \\
0.782 & 0.746 & 0.728 & 0.710 & 0.692 & 0.674 & 0.657 & 0.973 & 0.948 & 0.926 & 0.907 & 0.888 & 0.856 & 0.829 & 0.806 & 0.786 & 0.768 & 0.733 & 0.725 & 0.725 \\
0.782 & 0.746 & 0.728 & 0.710 & 0.692 & 0.674 & 0.657 & 0.640 & 0.972 & 0.949 & 0.928 & 0.908 & 0.874 & 0.844 & 0.820 & 0.798 & 0.779 & 0.741 & 0.730 & 0.730 \\
0.782 & 0.746 & 0.728 & 0.710 & 0.692 & 0.674 & 0.657 & 0.640 & 0.622 & 0.974 & 0.952 & 0.930 & 0.893 & 0.862 & 0.835 & 0.812 & 0.791 & 0.749 & 0.735 & 0.735 \\
0.782 & 0.746 & 0.728 & 0.710 & 0.692 & 0.674 & 0.657 & 0.640 & 0.622 & 0.605 & 0.976 & 0.953 & 0.913 & 0.879 & 0.850 & 0.825 & 0.803 & 0.759 & 0.740 & 0.740 \\
0.782 & 0.746 & 0.728 & 0.710 & 0.692 & 0.674 & 0.657 & 0.640 & 0.622 & 0.605 & 0.589 & 0.975 & 0.932 & 0.896 & 0.866 & 0.839 & 0.816 & 0.768 & 0.745 & 0.745 \\
0.782 & 0.746 & 0.728 & 0.710 & 0.692 & 0.674 & 0.657 & 0.640 & 0.622 & 0.605 & 0.589 & 0.572 & 0.954 & 0.916 & 0.883 & 0.855 & 0.830 & 0.779 & 0.750 & 0.750 \\
0.782 & 0.746 & 0.728 & 0.710 & 0.692 & 0.674 & 0.657 & 0.640 & 0.622 & 0.605 & 0.589 & 0.572 & 0.539 & 0.957 & 0.920 & 0.888 & 0.860 & 0.803 & 0.761 & 0.760 \\
0.782 & 0.746 & 0.728 & 0.710 & 0.692 & 0.674 & 0.657 & 0.640 & 0.622 & 0.605 & 0.589 & 0.572 & 0.539 & 0.507 & 0.959 & 0.924 & 0.892 & 0.829 & 0.783 & 0.770 \\
0.782 & 0.746 & 0.728 & 0.710 & 0.692 & 0.674 & 0.657 & 0.640 & 0.622 & 0.605 & 0.589 & 0.572 & 0.539 & 0.507 & 0.476 & 0.961 & 0.926 & 0.857 & 0.806 & 0.780 \\
0.782 & 0.746 & 0.728 & 0.710 & 0.692 & 0.674 & 0.657 & 0.640 & 0.622 & 0.605 & 0.589 & 0.572 & 0.539 & 0.507 & 0.476 & 0.446 & 0.962 & 0.886 & 0.830 & 0.790 \\
0.782 & 0.746 & 0.728 & 0.710 & 0.692 & 0.674 & 0.657 & 0.640 & 0.622 & 0.605 & 0.589 & 0.572 & 0.539 & 0.507 & 0.476 & 0.446 & 0.416 & 0.918 & 0.857 & 0.800 \\
0.782 & 0.746 & 0.728 & 0.710 & 0.692 & 0.674 & 0.657 & 0.640 & 0.622 & 0.605 & 0.589 & 0.572 & 0.539 & 0.507 & 0.476 & 0.446 & 0.416 & 0.345 & 0.927 & 0.825 \\
0.782 & 0.746 & 0.728 & 0.710 & 0.692 & 0.674 & 0.657 & 0.640 & 0.622 & 0.605 & 0.589 & 0.572 & 0.539 & 0.507 & 0.476 & 0.446 & 0.416 & 0.345 & 0.280 & 0.850 \\
\end{pmatrix}$}
    \caption{Numeric values of $\lambda_{i,j}$, $\alpha_{i,j}$, and the coefficient matrix $A$ corresponding to the instance with 
$\vec{\appbelow}$ as defined in Equation~\eqref{def:alpha-left-vals}. All values are truncated to three decimal places.}
    \label{fig:matrices}
\end{figure}

\begin{claim}\label{claim:sol-to-lp}
    The value of the linear program ($\ref{program:lp-primal}$) when defined with the matrix $A^*$ is at least $\upperbound$.
\end{claim}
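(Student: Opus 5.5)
By weak duality, as set up in the dual program~\eqref{program:lp-dual}, it suffices to exhibit a single feasible dual solution $(\lambda,\vec y)$ for the matrix $A^*$ of Figure~\ref{fig:matrices} with $\lambda \ge \upperbound$. Such a solution consists of a probability vector $\vec y\in\mathbb R_{\ge 0}^{19}$ over the reserves $r_1,\dots,r_{19}$ satisfying $(A^{*T}\vec y)_j \ge \upperbound$ for every column $j\in[20]$. Feasibility of $\vec y$ together with any primal feasible $(t,\vec w)$ then gives
\[
t \;=\; t\sum_i y_i \;\ge\; \vec y^{T} A^*\vec w \;=\; \sum_j w_j (A^{*T}\vec y)_j \;\ge\; \upperbound,
\]
which is exactly the claim. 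So the whole proof reduces to producing $\vec y$ and checking 20 linear inequalities, using the displayed (truncated) entries. The truncation is harmless: the paper stresses that these entries are themselves valid lower bounds, so a certificate built from them remains valid.

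To find $\vec y$, I would solve the dual LP numerically and then round the solution to a few decimals with a small slack. Each column of $A^*$ is easy to describe. Column $j$ receives the small value $\appbelow_j$ from every row $i\ge j$, receives the large values $\max\{\alpha_{i,j},\appabove_i\}$ from rows $i<j$, and the last column receives $\appabove_i$ from every row.

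The binding constraints will be the first column, which is constant $0.782$ and therefore never binds, and the middle and last columns. The last column forces weight toward the later rows, since $\appabove_i$ increases with $i$. The middle columns force weight toward early rows, because rows $i\ge j$ contribute only $\appbelow_j\approx 0.6$ there. I expect the optimal dual to spread weight across essentially all rows with most constraints tight, so I would fix $\vec y$ to exact rationals that sum to $1$, for instance with four-decimal entries. I would then present in a table, for each $j$, the value $\sum_i y_i A^*_{i,j}$ and verify that it is at least $\upperbound$.

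The main obstacle is purely numerical. The margin between the LP value and $\upperbound$ may be tiny, so the rounded $\vec y$ must keep every column inequality and the normalization $\sum_i y_i=1$ simultaneously. Adjusting the remainder on one coordinate would be the first repair to try. If some column falls short after rounding, I would re-solve the LP with the target $\upperbound+\varepsilon$ and round again.
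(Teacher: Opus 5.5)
Your reduction is the same as the paper's: weak duality against the dual program, with $\lambda=0.746$ and a probability vector $\vec y$ over the $19$ reserves, checked column by column against the displayed (truncated) entries of $A$. The gap is that you stop before producing $\vec y$. For a finite numerical claim like this one, the certificate \emph{is} the proof. A recipe for computing one (solve, round, repair) establishes nothing until it is carried out. Nothing in your proposal rules out that the LP value for this particular matrix lies slightly below $0.746$, in which case no rounding or repair would succeed. The paper closes the argument by writing down $\vec y=(0.00003,\allowbreak 0.07532,\allowbreak 0.07298,\allowbreak 0.06952,\allowbreak 0.06542,\allowbreak 0.05836,\allowbreak 0.05455,\allowbreak 0.05423,\allowbreak 0.04791,\allowbreak 0.04184,\allowbreak 0.04223,\allowbreak 0.07492,\allowbreak 0.06335,\allowbreak 0.05387,\allowbreak 0.04573,\allowbreak 0.04016,\allowbreak 0.08034,\allowbreak 0.03813,\allowbreak 0.02111)$. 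Its entries sum to exactly $1$, and multiplying out confirms $(A^{T}\vec y)_j\ge 0.746$ for all $20$ columns.

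Doing the check also shows that the rounding step you flag as the main obstacle is sharper than your plan allows for. Column $1$ gives $0.782$, but columns $2$ through $20$ all land within roughly $7\cdot 10^{-6}$ of $0.746$. Column $2$ equals $0.746+0.187\,y_1\approx 0.7460056$, and the tightest, column $16$, is about $0.7460014$. Rounding to four decimals moves each $y_i$ by up to $5\cdot 10^{-5}$. Since coefficients within a column differ by several tenths, this perturbs column sums by amounts that can easily exceed such margins. Expect to need five-decimal entries, as the paper uses, together with an exact check of all $20$ inequalities and of $\sum_i y_i=1$.
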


\begin{proof}[Proof of Claim~\ref{claim:sol-to-lp}]
    To prove this claim, we construct a feasible solution to the dual linear program. 
    Let $\lambda = 0.746$, and:
    \[ \vec{y} = 
    \begin{pmatrix}
    0.00003 \\
    0.07532 \\
    0.07298 \\
    0.06952 \\
    0.06542 \\
    0.05836 \\
    0.05455 \\
    0.05423 \\
    0.04791 \\
    0.04184 \\
    0.04223 \\
    0.07492 \\
    0.06335 \\
    0.05387 \\
    0.04573 \\
    0.04016 \\
    0.08034 \\
    0.03813 \\
    0.02111
    \end{pmatrix}
    \]
    One can verify that the entries of $y$ sum to $1$ and that $A^t \cdot \vec{y} \geq \lambda\cdot\vec{1}$, confirming feasibility.

\end{proof}